\renewcommand{\vec}[1]{\ensuremath\boldsymbol{#1}}
\newcommand{\p}{\ensuremath\partial}
\title{The Three-Dimensional Jump Conditions for the Stokes Equations with Discontinuous Viscosity, Singular Forces, and an Incompressible Interface
	\thanks{This work was supported by the University at Buffalo SUNY and NSF Award \#1253739}}
\author{Prerna Gera \and David Salac \thanks{Department of Mechanical and Aerospace Engineering, University at Buffalo SUNY, Buffalo, NY, 14260.}}
\begin{document}

\maketitle

\begin{abstract}
	The three-dimensional jump conditions for the pressure and velocity fields, up to the second normal derivative,
	across an incompressible/inextensible interface	in the Stokes regime are derived herein. 
	The fluid viscosity is only piecewise continuous in the domain while the embedded interface exerts 
	singular forces on the surround fluids. This gives rise to discontinuous solutions in the pressure and 
	velocity field. These jump conditions are required to develop accurate numerical methods, such 
	as the Immersed Interface Method, for the solutions of the Stokes equations in such situations.

\end{abstract}
\begin{keywords} 
	Stokes equations, discontinuous viscosity, singular force, immersed interface method, jump conditions, incompressible interface, inextensible interface
\end{keywords}

\begin{AMS}
	65N06, 65N12, 76Z05, 76D07, 35R05
\end{AMS}

\section{\label{sec:introduction} Introduction}

Biological system containing cells can be considered as a multiphase fluid
problem, with the cell membranes forming a boundary between fluids of differing material properties.
Unlike standard multiphase fluid problems, such as droplets or bubbles, biological 
multiphase systems, such as red blood cells surrounded by blood plasma, are characterized by 
inextensible membranes and singular forces due to both a surface-tension like contribution and a resistance to bending \cite{Pozrikidis2003}.
Various techniques have been developed to model such systems. 
These techniques can be split into two categories, based on how they track the location of the interface.
Lagrangian methods, such as the boundary integral method \cite{Veerapaneni2009,Veerapaneni2011}
explicitly track the location of the interface. While these method can be highly accurate, they 
are difficult to extend to three-dimensions and are not well suited when the membrane undergoes
large deformations. Eulerian method, on the other hand, implicitly track the location
of the interface. Examples of Eulerian methods include the phase-field method \cite{Biben2005,Lowengrub2009}
and the level set method \cite{Salac2011,Salac2012b}. Advantages of the Eulerian methods include 
easy extensibility to three dimensions and the ability to handle large membrane deformations naturally.
The major disadvantage is the added computational cost of implicit membrane tracking. 

Eulerian methods can be further divided into two categories depending on how they treat
the singular forces arising at the embedded membrane. The first type, characterized 
by the immersed boundary method \cite{Peskin1977} or the continuum surface force method \cite{Chang1996},
distribute (smooth) the singular forces over a small region near the interface. This essentially
turns the membrane forces into localized body-force terms, which are then included in the 
fluid governing equations. For most situations this smoothing results in a first-order accurate 
method for the fluid equations \cite{Tan2009}. 

The second type of Eulerian method avoids the smearing of the interface by explicitly incorporating
the singular forces and embedded boundary conditions in the field equations. These forces and immersed boundary conditions
are included by explicitly considering the jump in the solution and derivatives of the solution across the interface. 
Mayo \cite{Mayo1984} used this idea to solve the Poisson and Biharmonic equation on irregular domains.
Leveque and Li called this technique the Immersed Interface Method (IIM) and used it to solve Elliptic equations \cite{Leveque1994}.
They later extended this to Stokes flow with elastic boundaries or surface tension \cite{Leveque1997}. Later the immersed interface
method was further developed to model the Navier-Stokes equations \cite{Le2006109,Li2001822,Xu2006454}.
Interested readers are referred to the book by Li and Ito for more information about the immersed interface method \cite{li2006immersed}.

The application of the immersed interface method to a multiphase fluid system requires that the jumps in the velocity and pressure field
be explicitly calculated. For standard multiphase systems these jump conditions for a discontinuous viscosity across the interface have been
determined both in two dimensions \cite{Tan20089955,Tan2009} and three dimensions \cite{Ito2006,Xu2006}. Only recently has work been
done on extending the immersed interface method to multiphase flows with inextensible membranes \cite{Tan2012}. This recent work, though,
is limited to the two-dimensional, constant-viscosity case and is not applicable to the more general discontinuous viscosity situation.

In this work the three-dimensional jump conditions for a multiphase Stokes flow system with an inextensible interface, 
singular forces, and a discontinuous viscosity are for the first time derived. The jump up to the 
second normal derivative of both the pressure and velocity fields will be developed.
These jump conditions will be used to construct an Immersed Interface solver. Several analytic test cases 
are also developed to verify the accuracy of the jump conditions.

This paper is organized as follows. First, the basic governing equations
for the three-dimensional Stokes system with an inextensible membrane are presented.
Next, in Sec. \ref{sec:IIM} the immersed interface method is briefly outlined. The jump conditions for 
the velocity and pressure in the Stokes equations are developed in Sec. \ref{sec:JumpConditions}. 
The derivation of analytic test cases is shown in Sec. \ref{sec:TestCases}
while results and convergence analysis follow in Sec. \ref{sec:Results}. A brief conclusion and future work follows
in Sec. \ref{sec:Conclusion}.

\section{\label{sec:StokesEquations} Governing Equations}

Let a three-dimensional bounded domain, $\Omega$, contain a closed and incompressible material interface $\Gamma$.
For simplicity assume that the interface is wholly contained in the domain.
The region enclosed by the interface is denoted as $\Omega^-$ while the region outside 
the interface is denoted by $\Omega^+$, Fig. \ref{fig:CompDomain}. The fluid in each domain is 
modeled using the Stokes equations:
\begin{align}
	\label{eq:momentum}	\nabla p^\pm =& \mu^\pm \nabla^2\vec{u}^\pm+\vec{g}^\pm, \\
	\label{eq:divfree}	\nabla\cdot\vec{u}^\pm=&0,
\end{align}
with boundary conditions $\vec{u}^+=\vec{u}_b$ on the boundary of $\Omega$. The pressure, $p$,
body force term, $\vec{g}$, and viscosity, $\mu$, are all piecewise constant in the domain, with a finite jump
occurring across the interface. The velocity field, $\vec{u}=(u,v,w)$, is assumed to be $C^0-$continuous across the 
interface. It is also assumed that the velocity obey a no-slip condition on the interface.
It is thus possible to state that the velocity of the interface $\Gamma$ is given by $\vec{q}=\vec{u}^+=\vec{u}^-$.

Application of the volume incompressibility constraint, Eq. (\ref{eq:divfree}), to the momentum equations, Eq. (\ref{eq:momentum}),
leads to a pressure-Poisson equation in each domain,
\begin{equation}
	\label{eq:pressue-poisson}
	\nabla^2 p^\pm=\nabla\cdot\vec{g}^\pm,
\end{equation}
with a boundary condition on the outer domain of
\begin{equation}
	\frac{\partial p^+}{\partial \vec{n}_b}=\mu^+ \vec{n}_b\cdot\nabla^2\vec{u}^+ +\vec{n}_b\cdot\vec{g}^+,
\end{equation}
where $\vec{n}_b$ is the outward facing normal to the domain $\Omega$.

\begin{figure}[ht]
	\center
	\includegraphics[width=2in]{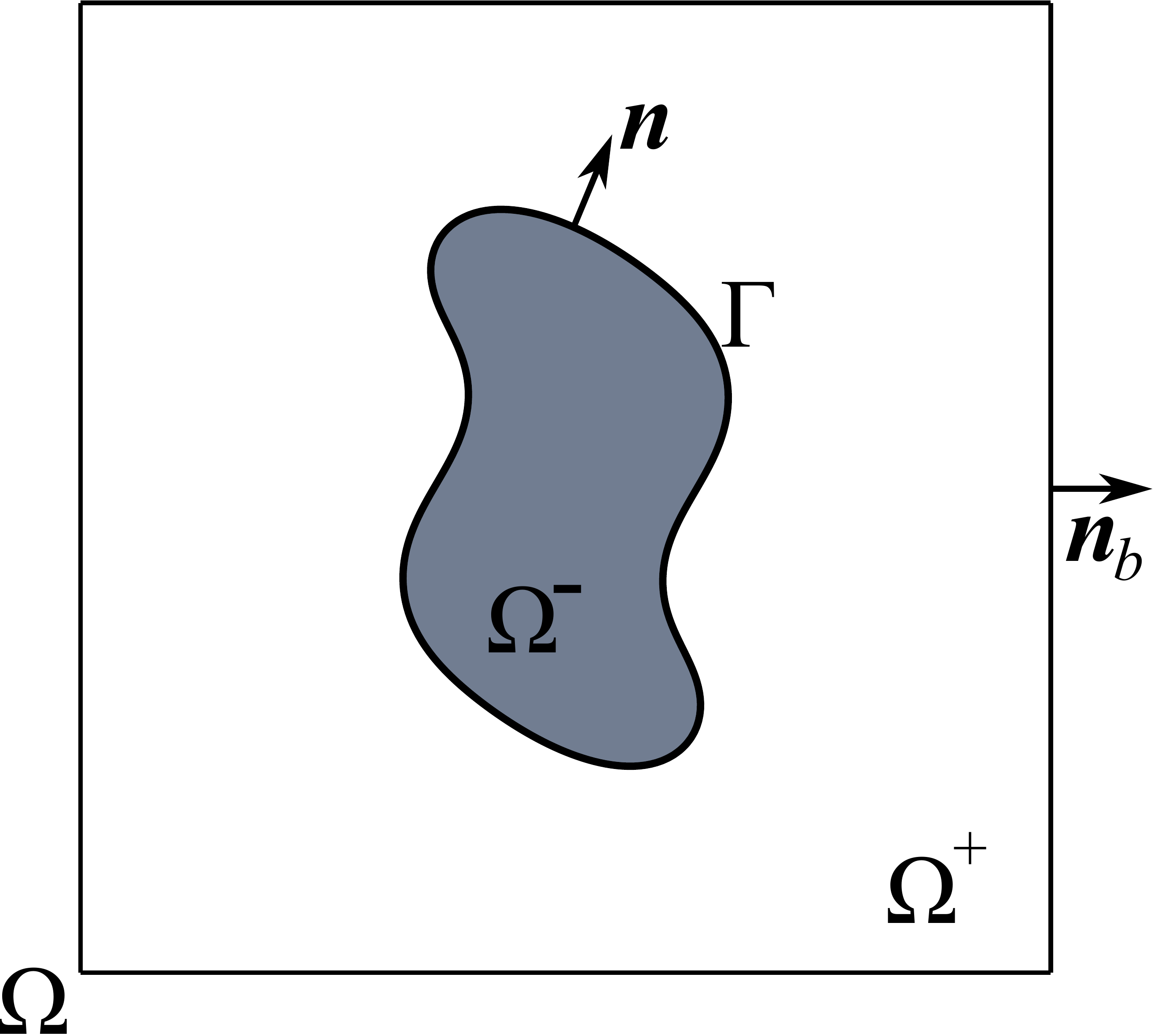}
	\caption{\label{fig:CompDomain} Sample computational domain.}
\end{figure}	

The incompressible interface requires that the velocity on the interface be divergence-free:
\begin{equation}
	\label{eq:surface_constraint}
	\nabla_s\cdot\vec{q}=0,
\end{equation}
where $\nabla_s=\left(\vec{I}-\vec{n}\otimes\vec{n}\right)\nabla$ is the surface gradient and $\nabla_s\cdot$ is the surface divergence operator.

The presence of the interface results in a singular force, $\vec{f}$, exerted on the surrounding fluids.
This singular force is application specific. In vesicle/cell simulations
this force would be composed of a variable surface tension, $\gamma$, and other singular forces, $\vec{f}_0$:
\begin{equation}
	\label{eq:singular_forces}
	\vec{f}=-\gamma H \vec{n}+\nabla_s\gamma+\vec{f}_0,
\end{equation}
where $\vec{n}$ is the outward facing normal to the interface $\Gamma$ (from $\Omega^-$ into $\Omega^+$) and $H$ is the total curvature. An example of 
other singular forces which might be applied are the bending forces in vesicle and red-blood cell simulations \cite{Salac2012b,Seifert1997} while
the surface tension will be chosen to enforce the surface incompressibility constraint, Eq. (\ref{eq:surface_constraint}). Future work will consider
singular forces of this form.

The singular force is balanced by a jump in the normal component of the stress tensor:
\begin{equation}
	\label{eq:stress_balance}
	\left(\vec{\sigma}^+-\vec{\sigma}^-\right)\cdot\vec{n}=\vec{f},
\end{equation}
where the stress tensor is given by
\begin{equation}
	\vec{\sigma}^\pm=-p^\pm\vec{I}+\mu^\pm\left(\nabla\vec{u}^\pm+\left(\nabla\vec{u}^\pm\right)^T\right).
\end{equation}

\section{\label{sec:IIM} Immersed Interface Method}

The governing equations as described in Sec. \ref{sec:StokesEquations} result in two coupled, but discontinuous, fluid fields. In this section
the scheme developed in Ref. \cite{Russell2003} and used to solve the two-dimensional Stokes equations in Ref. \cite{Lai200899} is reviewed. 
This is also the impetus for deriving the normal jump conditions across the interface.

Begin by 
discritizing the domain $\Omega$ using a uniform Cartesian mesh with a grid spacing of $h$ in all directions. 
While this section will focus on two-dimensional domains 
the extension to three-dimensions is straight-forward and used in the subsequent sections.
Now consider the solution to a generic Poisson problem of the
form 
\begin{equation}
	\label{eq:generic_poisson}
	\nabla^2 u=f,
\end{equation}
where both $u$ and $f$ are (possibly) discontinuous across the interface. Upon discritizing Eq. (\ref{eq:generic_poisson}) it is possible to classify all grid
points as either regular or irregular, Fig. \ref{fig:IIMStencils}. Regular nodes are those nodes where the interface does not cross the discritized stencil. 
As the solution is continuous away from the interface the discretization at regular nodes does not need to be modified. 
Irregular nodes are defined
as those where the interface does cross the discritized stencil. As the solution $u$ may be discontinuous across the interface it is not possible to use
the standard discretization. Instead the system is modified at irregular points to take into account the discontinuity.

\begin{figure}[ht]
	\center
	\includegraphics[width=2.0in]{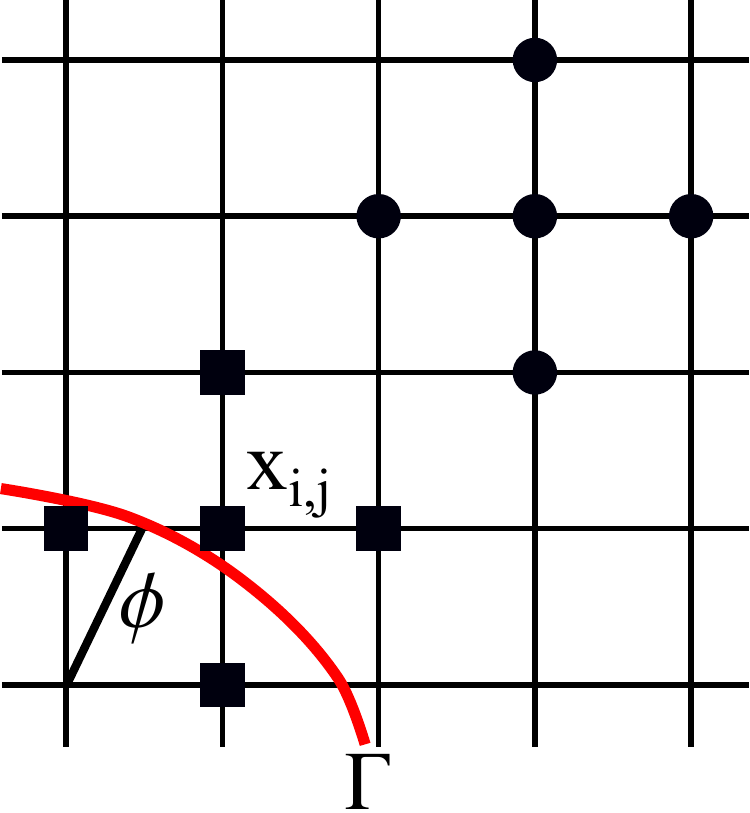}
	\caption{\label{fig:IIMStencils} Discritized Laplacian at a regular node (circle) and irregular node (square). Irregular nodes are those where the interface 
	intersects the stencil. The signed distance function from any grid point to the interface is given by $\phi$.}
\end{figure}	

Consider the irregular point $\vec{x}_{i,j}$ shown in Fig. \ref{fig:IIMStencils}. Let this point exist in the $\Omega^+$ domain.
A standard second-order finite difference discretization results in
\begin{equation}
	\label{eq:poisson_discritization}
	\frac{u^+_{i-1,j}+u^+_{i+1,j}+u^+_{i,j-1}+u^+_{i,j+1}-4 u^+_{i,j}}{h^2}=f^+_{i,j}.
\end{equation}
The issue is the values at $\vec{x}_{i-1,j}$ and $\vec{x}_{i,j-1}$ exist in the $\Omega^-$ domain, not the $\Omega^+$ domain. 
This fact must be taken into account in the discretization.

Let a jump across the interface be defined as 
\begin{equation}
	[u]=\lim_{\epsilon\to 0^+} u(\vec{x}_\Gamma+\epsilon \vec{n}) - \lim_{\epsilon\to 0^+} u(\vec{x}_\Gamma-\epsilon \vec{n}),
\end{equation}
where $\vec{x}_\Gamma$ is a point on the interface and $\vec{n}$ is the outward normal vector. Assume that the jumps in the solution, $[u]_\Gamma$,
the first normal derivative, $[\partial u/\partial \vec{n}]_\Gamma$, and the second normal derivative $[\partial^2 u/\partial \vec{n}^2]_\Gamma$ 
are known on the interface for the problem given in Eq. (\ref{eq:generic_poisson}). 
Using a Taylor Series expansion in the normal direction about an interface point the jumps may be extended to a grid point by
\begin{equation}
	[u]_{i,j} = [u]_\Gamma+\phi_{i,j} \left[\frac{\partial u}{\partial \vec{n}}\right]_\Gamma+\frac{\phi_{i,j}^2}{2} \left[\frac{\partial^2 u}{\partial \vec{n}^2}\right]_\Gamma
		+\mathcal{O}(h^3),
\end{equation}
where $\phi_{i,j}$ is the signed-distance from the grid point $\vec{x}_{i,j}$ to the interface, Fig. \ref{fig:IIMStencils}. Note that by using the 
signed distance functions it is possible to account for extension of the jumps into either domain.

By extending the solution jumps from the interface to the grid points it can be written that $u^+_{i,j}=u^-_{i,j}+[u]_{i,j}$.
Use this expression in the Eq. (\ref{eq:poisson_discritization}) to get
\begin{align}
	\label{eq:poisson_discritization1}
	\frac{u^-_{i-1,j}+u^+_{i+1,j}+u^-_{i,j-1}+u^+_{i,j+1}-4 u^+_{i,j}}{h^2} \\ \nonumber
		+\frac{[u]_{i-1,j}+[u]_{i,j-1}}{h^2}=f^+_{i,j}.
\end{align}
As the jumps $[u]_{i-1,j}$ and $[u]_{i,j-1}$ are known they can be moved to the right-hand side as
explicit corrections on the linear system,
\begin{align}
	\label{eq:poisson_discritization2}
	\frac{u^-_{i-1,j}+u^+_{i+1,j}+u^-_{i,j-1}+u^+_{i,j+1}-4 u^+_{i,j}}{h^2}=f^+_{i,j}+C_{i,j},
\end{align}
where $C_{i,j}$ is the total correction needed at an irregular node at location $\vec{x}_{i,j}$. In this case
$C_{i,j}=-([u]_{i-1,j}+[u]_{i,j-1})/h^2$. This method is easily extended to irregular nodes on either side of the interface and 
to three-dimensional systems.

It should be noted that the extension of the jumps must be calculated up to third-order accuracy to ensure that irregular nodes have
a local truncation error of $\mathcal{O}(h)$. Despite this higher local truncation error the overall method will retain the second-order
accuracy of the underlying discretization \cite{Adams2002, Leveque1997, Li2001}. If the second-normal derivative is not taken into 
account the local truncation error for irregular nodes will be reduced to $\mathcal{O}(1)$ and the overall scheme would only be first-order.

\section{\label{sec:JumpConditions} Derivation of Jump Conditions}

The governing equations in Sec. \ref{sec:StokesEquations} result in two coupled Poisson problems. 
Using the Immersed Interface Method of Sec. \ref{sec:IIM} the two systems can be written over the entire domain by including the appropriate corrections:
\begin{align}
	\label{eq:augmented_momentum} \nabla^2 \left(\mu\vec{u}\right)=&\nabla p-\vec{g}+\vec{C}_u,\\
	\label{eq:augmented_pressure} \nabla^2 p=&\nabla\cdot\vec{g}+C_p,
\end{align}
where the corrections $\vec{C}_u$ and $C_p$ are non-zero only at irregular grid points. To calculate these corrections at irregular grid points the jumps up to the second-normal derivatives
need to be derived for both the pressure and the augmented velocity, $\mu \vec{u}$. This section will outline the derivation of these jump conditions. It should be noted that
a $C^0$-continuous velocity indicates that $[\vec{u}]=0$. It should also be noted that the interface can be approached from either the $\Omega^-$ or $\Omega^+$ domains.
It is thus defined that 
\begin{align}
	f^-=&\lim_{\epsilon\to 0^+} f(\vec{x}_\Gamma+\epsilon \vec{n}) \\
	f^+=&\lim_{\epsilon\to 0^+} f(\vec{x}_\Gamma-\epsilon \vec{n}),
\end{align}
where $f$ is a quantity, such as velocity, of interest.

Let $\vec{n}$ be the outward (pointing into $\Omega^+$) unit normal while $\vec{t}$ and $\vec{b}$ are two unit tangent vectors chosen such that $\left(\vec{n},\vec{t},\vec{b}\right)$
form a Darboux Frame. Also chose $\vec{t}$ and $\vec{b}$ to lie along the principle directions of the 
interface, Fig. \ref{fig:LocalVectors}. Denote the directions in the normal and two tangent vectors as $n$, $t$, and $b$, respectively. In the following
sections all derivative in a particular direction are denoted as $\partial/\partial n$, $\partial/\partial t$, and $\partial/\partial b$. Note that a lack of bold-face 
indicates a direction and not a vector.

Theorems \ref{thm:normal_jump} to \ref{thm:slapdn} are non-trivial and non-obvious relations that will be used to develop the jump conditions
for the pressure and velocity.

\begin{figure}[ht]
	\center
	\includegraphics[width=1.5in]{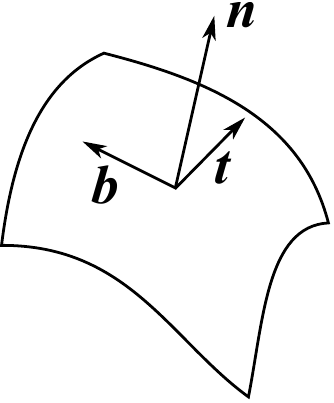}
	\caption{\label{fig:LocalVectors} The local vectors on an interface point. The unit vector $\vec{n}$ is the outward facing normal while $\vec{t}$ and $\vec{b}$ are
	unit tangent vectors which lie along the principle directions at the point on the interface.}
\end{figure}	

\begin{theorem}
	\label{thm:normal_jump}
	The jump in the viscosity times the normal derivative of the velocity in the normal direction is zero:
	\begin{equation}
		\label{eq:normal_derivative}
		\left[\mu \frac{\p \vec{u}}{\p n}\right]\cdot\vec{n}=0
	\end{equation}
\end{theorem}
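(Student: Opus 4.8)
The plan is to derive the result from the incompressibility conditions already stated, since the claim essentially says that the normal-normal component of the velocity gradient is continuous across the interface. First I would write the velocity gradient at a point on $\Gamma$ in the Darboux frame $(\vec{n},\vec{t},\vec{b})$, expressing $\nabla\vec{u}$ through its directional derivatives $\p\vec{u}/\p n$, $\p\vec{u}/\p t$, $\p\vec{u}/\p b$. The quantity $(\p\vec{u}/\p n)\cdot\vec{n}$ is then precisely the normal-normal entry of this gradient, and the volume incompressibility constraint $\nabla\cdot\vec{u}=0$ lets me rewrite it as
\begin{equation}
	\frac{\p\vec{u}}{\p n}\cdot\vec{n}=-\frac{\p\vec{u}}{\p t}\cdot\vec{t}-\frac{\p\vec{u}}{\p b}\cdot\vec{b}.
\end{equation}
This is the key reduction: the normal-normal derivative has been traded for purely tangential derivatives of the velocity.

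Next I would use the fact that the velocity is $C^0$-continuous across $\Gamma$, so that $\vec{u}^+=\vec{u}^-=\vec{q}$ on the interface. Because $\vec{t}$ and $\vec{b}$ are tangent directions, the derivatives $\p\vec{u}/\p t$ and $\p\vec{u}/\p b$ are differentiations taken \emph{along} the interface, where the velocity field coincides with the common interface velocity $\vec{q}$ from both sides. Hence each tangential directional derivative of $\vec{u}$ is single-valued on $\Gamma$, giving $[\p\vec{u}/\p t]=[\p\vec{u}/\p b]=0$ as vectors, and in particular the projections $\p\vec{u}/\p t\cdot\vec{t}$ and $\p\vec{u}/\p b\cdot\vec{b}$ are continuous. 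Taking the jump of the boxed identity above then yields $[\p\vec{u}/\p n\cdot\vec{n}]=0$.

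To finish I would bring in the viscosity. Since $\mu^\pm$ is constant on each side, the quantity $\mu\,\p\vec{u}/\p n\cdot\vec{n}$ on the $\pm$ side equals $\mu^\pm$ times the scalar $\p\vec{u}^\pm/\p n\cdot\vec{n}$. Combining the per-side relation $\p\vec{u}^\pm/\p n\cdot\vec{n}=-(\p\vec{u}^\pm/\p t\cdot\vec{t}+\p\vec{u}^\pm/\p b\cdot\vec{b})$ with the surface-divergence-free condition $\nabla_s\cdot\vec{q}=0$ should show that this tangential sum is in fact \emph{zero}, not merely continuous: because $\vec{t},\vec{b}$ lie along the interface and $\nabla_s$ only involves tangential derivatives, the right-hand side equals $-\nabla_s\cdot\vec{q}=0$. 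Thus $\p\vec{u}^\pm/\p n\cdot\vec{n}=0$ separately on each side, whence $\mu^\pm\,\p\vec{u}^\pm/\p n\cdot\vec{n}=0$ and the jump vanishes trivially.

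The main obstacle I anticipate is handling the decomposition of $\nabla\cdot\vec{u}$ in the Darboux frame carefully, since the curvilinear frame carries curvature terms: the tangential directional derivatives $\p\vec{u}/\p t\cdot\vec{t}$ are not the full surface divergence unless one accounts for how $\vec{t},\vec{b},\vec{n}$ rotate along $\Gamma$. I would need to verify whether those connection/curvature contributions cancel or whether the intended reading of $\p/\p t$ already incorporates them, so that the identification $\p\vec{u}/\p t\cdot\vec{t}+\p\vec{u}/\p b\cdot\vec{b}=\nabla_s\cdot\vec{q}$ holds exactly; pinning down this frame bookkeeping is the delicate step, whereas the $C^0$-continuity and constant-viscosity arguments are routine.
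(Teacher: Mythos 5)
Your proposal is correct and takes essentially the same approach as the paper: both decompose the volume incompressibility constraint in the Darboux frame and invoke the interface incompressibility condition $\nabla_s\cdot\vec{q}=0$ (passed to the one-sided limits, $\nabla_s\cdot\vec{u}^\pm=0$) to eliminate the tangential terms, so that $\mu^\pm\,\frac{\partial\vec{u}^\pm}{\partial n}\cdot\vec{n}$ vanishes on each side and the jump vanishes with it. The frame-bookkeeping worry you raise is harmless: since $\partial/\partial t$ and $\partial/\partial b$ are ambient directional derivatives of the vector field evaluated at the interface point, the identity $\nabla_s\cdot\vec{v}=\frac{\partial\vec{v}}{\partial t}\cdot\vec{t}+\frac{\partial\vec{v}}{\partial b}\cdot\vec{b}$ holds pointwise with no extra connection or curvature terms, and it is exactly the identity the paper itself uses.
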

\begin{proof}
	For the fluid in each domain the incompressibility conditions holds, $\mu^\pm \nabla\cdot\vec{u}^\pm=0$. Thus, the jump in the incompressibility
	condition is zero across the interface. In terms of the local vectors this is 
	\begin{equation}
		\label{eq:volume_incompressible}
		\left[\mu\left(\frac{\p \vec{u}}{\p n }\cdot\vec{n}+\frac{\p \vec{u}}{\p t}\cdot\vec{t}+\frac{\p \vec{u}}{\p b}\cdot\vec{b}\right)\right]=0.
	\end{equation}
	
	The velocity as you approach the interface must also be surface-divergent free, $\mu^\pm \nabla_s\cdot\vec{u}^\pm=0$.
	Therefore the the jump in the surface divergence of the 
	velocity must also be zero,
	\begin{equation}
		\label{eq:surface_incompressible}
		\left[\mu\left(\frac{\p \vec{u}}{\p t}\cdot\vec{t}+\frac{\p \vec{u}}{\p b}\cdot\vec{b}\right)\right]=0.
	\end{equation}	
	Combining Eqs. (\ref{eq:volume_incompressible}) and (\ref{eq:surface_incompressible}) results in 
	\begin{equation}		
		\left[\mu\left(\frac{\p \vec{u}}{\p n}\right)\cdot\vec{n}\right]=0.		
	\end{equation}
	As the normal vector is continuous across the interface, $\vec{n}^-=\vec{n}^+$, the jump condition, Eq. (\ref{eq:normal_derivative}), is obtained.
\qquad \end{proof}

\begin{theorem}
	\label{thm:dudn2}
	The following expression holds on the interface,
	\begin{equation}
		\left[\mu\left(\frac{\p^2 \vec{u}}{\p n^2}\cdot\vec{n}+\frac{\p^2\vec{u}}{\p n\p t}\cdot\vec{t}+\frac{\p^2\vec{u}}{\p n\p b}\cdot\vec{b}\right)\right]=0.
	\end{equation}
\end{theorem}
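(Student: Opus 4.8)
The plan is to obtain the identity by differentiating the volume incompressibility constraint in the normal direction, exactly as Theorem~\ref{thm:normal_jump} obtained its identity by manipulating the \emph{undifferentiated} constraint. The starting point is the observation already used in Eq.~(\ref{eq:volume_incompressible}), namely that the divergence is the trace of the velocity gradient and hence may be written in the orthonormal Darboux frame as $\nabla\cdot\vec{u}=\frac{\p\vec{u}}{\p n}\cdot\vec{n}+\frac{\p\vec{u}}{\p t}\cdot\vec{t}+\frac{\p\vec{u}}{\p b}\cdot\vec{b}$. The key point I would stress is that $\nabla\cdot\vec{u}^\pm=0$ holds \emph{identically} throughout each subdomain, not merely on $\Gamma$; consequently every spatial derivative of $\nabla\cdot\vec{u}^\pm$ vanishes there as well, and in particular $\frac{\p}{\p n}\!\left(\nabla\cdot\vec{u}^\pm\right)=0$ in $\Omega^\pm$.

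First I would apply $\p/\p n$ to the frame expression of the divergence and expand by the product rule. The terms in which the normal derivative falls on the velocity factor reproduce precisely the three terms in the statement, $\frac{\p^2\vec{u}}{\p n^2}\cdot\vec{n}+\frac{\p^2\vec{u}}{\p n\p t}\cdot\vec{t}+\frac{\p^2\vec{u}}{\p n\p b}\cdot\vec{b}$, while the remaining terms collect the normal derivatives of the frame vectors, $\frac{\p\vec{u}}{\p n}\cdot\frac{\p\vec{n}}{\p n}+\frac{\p\vec{u}}{\p t}\cdot\frac{\p\vec{t}}{\p n}+\frac{\p\vec{u}}{\p b}\cdot\frac{\p\vec{b}}{\p n}$. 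Showing that this second group vanishes is the crux of the argument and the step I expect to be the main obstacle, since, unlike in Theorem~\ref{thm:normal_jump}, the mere continuity of $\vec{n},\vec{t},\vec{b}$ across $\Gamma$ is not by itself enough.

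To dispose of the frame-variation terms I would fix the off-interface extension of the frame that is natural in the signed-distance setting of Sec.~\ref{sec:IIM}: extend $\vec{n}$, $\vec{t}$, and $\vec{b}$ off $\Gamma$ by holding them constant along each normal line. For the normal this is the standard identity $\frac{\p\vec{n}}{\p n}=(\vec{n}\cdot\nabla)\vec{n}=\tfrac12\nabla|\nabla\phi|^2=0$, since $\vec{n}=\nabla\phi$ with $\phi$ the signed distance; the tangents then satisfy $\frac{\p\vec{t}}{\p n}=\frac{\p\vec{b}}{\p n}=0$ by construction, and this extension both keeps $(\vec{n},\vec{t},\vec{b})$ orthonormal away from $\Gamma$ and renders the mixed derivatives $\frac{\p^2\vec{u}}{\p n\p t}$ and $\frac{\p^2\vec{u}}{\p n\p b}$ symmetric and well defined. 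With all three frame derivatives zero the second group drops out, leaving $\frac{\p^2\vec{u}}{\p n^2}\cdot\vec{n}+\frac{\p^2\vec{u}}{\p n\p t}\cdot\vec{t}+\frac{\p^2\vec{u}}{\p n\p b}\cdot\vec{b}=0$ separately in $\Omega^+$ and $\Omega^-$.

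Finally I would multiply this pointwise identity by the piecewise-constant viscosity $\mu^\pm$. Because the parenthesized combination already vanishes on each side of the interface, $\mu^\pm$ times it is zero in each subdomain, so its jump across $\Gamma$ is trivially zero, which is the assertion. Note that no use of the surface incompressibility relation, Eq.~(\ref{eq:surface_incompressible}), is needed here, in contrast to Theorem~\ref{thm:normal_jump}: the full normal derivative of the full divergence is taken, rather than an isolated component. The only genuinely delicate ingredient is the frame-extension argument, which is the sole non-routine geometric input.
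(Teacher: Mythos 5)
Your proposal is correct and follows essentially the same route as the paper's own proof: differentiate the identically vanishing divergence along the normal, expand it in the Darboux frame, and discard the frame-derivative terms arising from the product rule. The one refinement is that where the paper merely asserts $\left[\p\vec{n}/\p n\right]=\left[\p\vec{t}/\p n\right]=\left[\p\vec{b}/\p n\right]=0$, you justify dropping those terms by fixing the constant-along-normal (signed-distance) extension so that $\p\vec{n}/\p n=\p\vec{t}/\p n=\p\vec{b}/\p n=0$ identically — which is actually the cleaner and logically sufficient justification, since continuity of the frame derivatives alone would still leave contributions like $[\mu]\,\frac{\p\vec{q}}{\p t}\cdot\frac{\p\vec{t}}{\p n}$ unaccounted for when $[\mu]\neq 0$.
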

\begin{proof}
	The incompressibility constraint, $\nabla\cdot\vec{u}=0$, must hold identically in both domains. Thus, the normal derivative of this constraint must
	also be zero, $\p \left(\nabla\cdot\vec{u}\right)/\p n=0$. It is thus possible to write
	\begin{align}
		\left[\mu\frac{\p}{\p n}\left(\nabla\cdot\vec{u}\right)\right]=& \left[\mu \frac{\p}{\p n}\left( \frac{\p\vec{u}}{\p n}\cdot\vec{n} + \frac{\p\vec{u}}{\p t}\cdot\vec{t} + \frac{\p\vec{u}}{\p b}\cdot\vec{b}\right) \right]		\nonumber \\
			=& \left[\mu\left( 	\frac{\p^2\vec{u}}{\p n^2}\cdot\vec{n} + \frac{\p^2\vec{u}}{\p n \p t}\cdot\vec{t} + \frac{\p^2\vec{u}}{\p n \p b}\cdot\vec{b}	\right)\right] \nonumber \\
			=& 0,
	\end{align}
	where the fact that $\left[\p \vec{n}/\p n\right]=0$, $\left[\p \vec{t}/\p n\right]=0$, and $\left[\p \vec{b}/\p n\right]=0$ are held to be true.
\qquad \end{proof}

\begin{theorem}
	\label{thm:slapdn}
	The following expression holds on the interface,
	\begin{equation}
		\label{eq:slapdn}
		\left[\mu \nabla_s^2\vec{u}\right]\cdot\vec{n}=\left[\mu\left(\frac{\p^2\vec{u}}{\p t^2}+\frac{\p^2\vec{u}}{\p b^2}\right)\right]\cdot\vec{n}.
	\end{equation}
\end{theorem}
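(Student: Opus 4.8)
The plan is to reduce the statement to Theorem \ref{thm:normal_jump} by first establishing a purely geometric decomposition of the surface Laplacian in the Darboux frame, and only then invoking the jump structure. I would work component-wise in the fixed Cartesian basis, so that $\nabla_s^2\vec{u}$, $\p^2\vec{u}/\p t^2$, and $\p^2\vec{u}/\p b^2$ all act on each component $u,v,w$ separately.

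First I would record two different expansions of the ambient Laplacian $\nabla^2\vec{u}$ at a point of $\Gamma$. Because $(\vec{n},\vec{t},\vec{b})$ is orthonormal, the completeness relation $\vec{n}\otimes\vec{n}+\vec{t}\otimes\vec{t}+\vec{b}\otimes\vec{b}=\vec{I}$ lets me write the trace of the Hessian as
\[
	\nabla^2\vec{u}=\frac{\p^2\vec{u}}{\p n^2}+\frac{\p^2\vec{u}}{\p t^2}+\frac{\p^2\vec{u}}{\p b^2},
\]
where each second derivative is read as the contraction of the Hessian with the corresponding frame vector. Separately, the standard decomposition of the Laplacian relative to a hypersurface gives
\[
	\nabla^2\vec{u}=\frac{\p^2\vec{u}}{\p n^2}+H\frac{\p\vec{u}}{\p n}+\nabla_s^2\vec{u},
\]
in which $\nabla_s^2=\nabla_s\cdot\nabla_s$ is the intrinsic (Laplace--Beltrami) surface Laplacian and $H=\kappa_1+\kappa_2$ is the total curvature. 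Subtracting these two expressions eliminates the normal--normal term and yields the key identity
\[
	\nabla_s^2\vec{u}=\frac{\p^2\vec{u}}{\p t^2}+\frac{\p^2\vec{u}}{\p b^2}-H\frac{\p\vec{u}}{\p n}.
\]

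With this identity in hand the jump condition follows quickly. I would multiply by the piecewise-constant viscosity, take the jump across $\Gamma$, and dot with $\vec{n}$, obtaining
\[
	\left[\mu\nabla_s^2\vec{u}\right]\cdot\vec{n}=\left[\mu\left(\frac{\p^2\vec{u}}{\p t^2}+\frac{\p^2\vec{u}}{\p b^2}\right)\right]\cdot\vec{n}-\left[\mu H\frac{\p\vec{u}}{\p n}\right]\cdot\vec{n}.
\]
Since the total curvature $H$ and the normal $\vec{n}$ depend only on the interface geometry, they are single-valued on $\Gamma$ and may be pulled outside the bracket, so the last term equals $H\left[\mu\,\p\vec{u}/\p n\right]\cdot\vec{n}$. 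Invoking Theorem \ref{thm:normal_jump}, which states precisely that $\left[\mu\,\p\vec{u}/\p n\right]\cdot\vec{n}=0$, this correction term vanishes and Eq. (\ref{eq:slapdn}) is obtained.

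I expect the main obstacle to be establishing the geometric decomposition cleanly rather than the jump argument itself. The delicate point is the precise meaning of the second tangential derivatives: the trace relation holds only when each term is read as the Hessian contracted with the relevant frame vector, and the mean-curvature term $H\,\p\vec{u}/\p n$ in the hypersurface decomposition must carry the correct sign and coefficient (readily verified, for instance, on a sphere where $H=\kappa_1+\kappa_2$ recovers the radial Laplacian $f''+\tfrac{2}{R}f'$). Once the bookkeeping of these frame derivatives and the Weingarten relations is handled correctly, the remaining steps are immediate, with Theorem \ref{thm:normal_jump} doing the essential work of annihilating the surviving normal-derivative term.
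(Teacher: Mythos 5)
Your proposal is correct and follows essentially the same route as the paper: the identity $\nabla_s^2\vec{u}=\nabla^2\vec{u}-\p^2\vec{u}/\p n^2-H\,\p\vec{u}/\p n$ combined with the frame decomposition $\nabla^2\vec{u}=\p^2\vec{u}/\p n^2+\p^2\vec{u}/\p t^2+\p^2\vec{u}/\p b^2$, followed by taking the jump, pulling out the continuous geometric factor $H$, and invoking Theorem \ref{thm:normal_jump} to annihilate the $H\left[\mu\,\p\vec{u}/\p n\right]\cdot\vec{n}$ term. The only difference is cosmetic: you derive the hypersurface decomposition by subtracting the two Laplacian expansions, whereas the paper quotes it directly from the literature.
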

\begin{proof}
	The full Laplacian in the local frame can be written as	
	\begin{equation}
		\label{eq:split_lap}
		\nabla^2\vec{u}=\frac{\p^2\vec{u}}{\p n^2}+\frac{\p^2\vec{u}}{\p t^2}+\frac{\p^2\vec{u}}{\p b^2},
	\end{equation}
	while the surface Laplacian can be written as \cite{Xu2003}
	\begin{equation}
		\nabla_s^2\vec{u}=\nabla^2\vec{u}-\frac{\p^2\vec{u}}{\p n^2}-H\frac{\p \vec{u}}{\p n}.
	\end{equation}
	Thus the following holds,
	\begin{equation}
		\left[\mu\nabla_s^2\vec{u}\right]\cdot\vec{n}=
					\left[\mu\left(\frac{\p^2\vec{u}}{\p t^2}+\frac{\p^2\vec{u}}{\p b^2}\right)\right]\cdot\vec{n} 
					-H\left[\mu\frac{\p \vec{u}}{\p n}\right]\cdot\vec{n}.
	\end{equation}
	Using the result of Theorem \ref{thm:normal_jump} the expression in Eq. (\ref{eq:slapdn}) is shown to hold true.
	
\qquad \end{proof}

The next three theorems outline the derivation of the pressure jump conditions.

\begin{theorem}
	\label{thm:pressure_jump}
	The jump in the pressure is given by
	\begin{equation}
		\label{eq:pressure_jump}
		[p]=-\vec{f}\cdot\vec{n}.
	\end{equation}
\end{theorem}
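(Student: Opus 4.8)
The plan is to extract the normal component of the stress-balance condition, Eq.~(\ref{eq:stress_balance}), by contracting it with $\vec{n}$ from both sides. First I would form the scalar $\vec{n}\cdot\left(\vec{\sigma}^+-\vec{\sigma}^-\right)\cdot\vec{n}$, which by the stress balance equals $\vec{f}\cdot\vec{n}$. Substituting the constitutive law $\vec{\sigma}^\pm=-p^\pm\vec{I}+\mu^\pm\left(\nabla\vec{u}^\pm+(\nabla\vec{u}^\pm)^T\right)$, the isotropic pressure part contributes $-\vec{n}\cdot\vec{I}\cdot\vec{n}\,[p]=-[p]$, since $\vec{n}$ is a unit vector that is continuous across $\Gamma$ (so $\vec{n}^-=\vec{n}^+$ and the jump acts only on $p$).

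Next I would reduce the viscous contribution. Using $\vec{n}\cdot\nabla=\p/\p n$, a short index-level computation shows that both $\vec{n}\cdot(\nabla\vec{u})\cdot\vec{n}$ and $\vec{n}\cdot(\nabla\vec{u})^T\cdot\vec{n}$ collapse to the single scalar $\frac{\p\vec{u}}{\p n}\cdot\vec{n}$, so that the full viscous normal stress is $2\mu\,\frac{\p\vec{u}}{\p n}\cdot\vec{n}$. Taking the jump across the interface, the viscous term therefore contributes $2\left[\mu\,\frac{\p\vec{u}}{\p n}\right]\cdot\vec{n}$ to the normal--normal stress balance.

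The decisive step is to observe that this viscous contribution vanishes identically. By Theorem~\ref{thm:normal_jump} we have $\left[\mu\,\frac{\p\vec{u}}{\p n}\right]\cdot\vec{n}=0$, so the normal--normal balance collapses to $-[p]=\vec{f}\cdot\vec{n}$, which is precisely Eq.~(\ref{eq:pressure_jump}). I expect the only real bookkeeping to lie in verifying that the two viscous pieces each reduce to the same normal-derivative term; once that symmetry is established, the cancellation via Theorem~\ref{thm:normal_jump} is immediate, and this is exactly why that earlier result was needed before attempting the pressure jump.
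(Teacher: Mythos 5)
Your proposal is correct and follows essentially the same route as the paper's own proof: both take the normal--normal component of the stress balance, Eq.~(\ref{eq:stress_balance}), reduce the viscous contribution to $2\left[\mu\,\frac{\p\vec{u}}{\p n}\right]\cdot\vec{n}$, and then invoke Theorem~\ref{thm:normal_jump} to eliminate it, leaving $[p]=-\vec{f}\cdot\vec{n}$. The only difference is cosmetic: you spell out the index-level symmetry $\vec{n}\cdot(\nabla\vec{u})\cdot\vec{n}=\vec{n}\cdot(\nabla\vec{u})^T\cdot\vec{n}$, which the paper leaves implicit.
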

\begin{proof}
	The balance of forces on the interface, Eq. (\ref{eq:stress_balance}), is
	\begin{equation}
		\left[-p\vec{n}+\mu\left(\nabla\vec{u}+\left(\nabla\vec{u}\right)^T\right)\cdot\vec{n}\right]=\vec{f}.
	\end{equation}
	Taking the normal component of this balance results in
	\begin{equation}
		[p]=2\left[\mu \frac{\p\vec{u}}{d n}\right]\cdot\vec{n}-\vec{f}\cdot\vec{n}.
	\end{equation}
	Using Eq. (\ref{eq:normal_derivative}) results in the jump of the pressure field, Eq. (\ref{eq:pressure_jump}).
\qquad \end{proof}
\begin{theorem}
	The jump in the normal derivative of the pressure may be written as
	\begin{equation}
		\label{eq:dpdn1}
		\left[\frac{\p p}{\p n}\right]=\left[\mu \nabla^2\vec{u}\right]\cdot\vec{n}
					+ \left[\vec{g}\right]\cdot\vec{n},
	\end{equation}
	or alternatively as
	\begin{align}
		\label{eq:dpdn2}
		\left[\frac{\p p}{\p n}\right]=&2[\mu]\left(\vec{n}\cdot\nabla_s^2\vec{q}+\kappa_t \frac{\p \vec{q}}{\p t}\cdot\vec{t}+\kappa_b \frac{\p \vec{q}}{\p b}\cdot\vec{b}\right) \nonumber \\
			& -\nabla_s\cdot\vec{f}+\left(\vec{f}\cdot\vec{n}\right)H+\left[\vec{g}\right]\cdot\vec{n},
	\end{align}
	where $\kappa_t$ and $\kappa_b$ are the principle curvatures along the $t$ and $b$ directions and recalling that $\vec{q}$ is the velocity on the interface.
\end{theorem}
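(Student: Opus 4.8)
The plan is to establish the first representation (\ref{eq:dpdn1}) directly and then transform its right-hand side into the surface form (\ref{eq:dpdn2}). For (\ref{eq:dpdn1}) I would just take the normal component of the momentum equation (\ref{eq:momentum}) in each phase, $\p p^\pm/\p n = \vec{n}\cdot\nabla p^\pm = \mu^\pm\,\vec{n}\cdot\nabla^2\vec{u}^\pm + \vec{n}\cdot\vec{g}^\pm$, and subtract the two one-sided limits. Since $\vec{n}$ is continuous across $\Gamma$, the jump passes through the dot product and (\ref{eq:dpdn1}) follows at once. This is the routine part.

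For (\ref{eq:dpdn2}) the work is to rewrite $\left[\mu\nabla^2\vec{u}\right]\cdot\vec{n}$ purely in interface quantities. First I would split the Laplacian with the identity used in the proof of Theorem \ref{thm:slapdn}, $\nabla^2\vec{u}=\nabla_s^2\vec{u}+\p^2\vec{u}/\p n^2 + H\,\p\vec{u}/\p n$; the last term vanishes after projecting on $\vec{n}$ and taking the jump, by Theorem \ref{thm:normal_jump}. The surface-Laplacian term collapses to $[\mu]\,\vec{n}\cdot\nabla_s^2\vec{q}$, because $\vec{u}^+$ and $\vec{u}^-$ share the common trace $\vec{q}$ on $\Gamma$ and $\nabla_s^2$ involves only tangential derivatives, so $\nabla_s^2\vec{u}^+=\nabla_s^2\vec{u}^-=\nabla_s^2\vec{q}$ and only the (constant) viscosity jump survives. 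For the remaining term $\left[\mu\,\p^2\vec{u}/\p n^2\right]\cdot\vec{n}$ I would invoke Theorem \ref{thm:dudn2} to trade it for the negatives of the two mixed jumps $\left[\mu\,\p^2\vec{u}/\p n\p t\right]\cdot\vec{t}$ and $\left[\mu\,\p^2\vec{u}/\p n\p b\right]\cdot\vec{b}$.

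The heart of the argument is converting each mixed jump into interface data. Writing $\p^2\vec{u}/\p n\p t=\p(\p\vec{u}/\p t)/\p n$, I would commute the normal and tangential directional derivatives (the commutator contributing a curvature multiple of $\p\vec{u}/\p t$ through the Darboux-frame relations $\p\vec{t}/\p t=-\kappa_t\vec{n}$ and $\p\vec{n}/\p t=\kappa_t\vec{t}$) and then pull the tangential derivative outside the jump bracket. Here the piecewise-constancy of $\mu$ is essential: it lets $\mu$ commute with $\p/\p t$ and lets the constant $[\mu]$ leave the tangential derivative, so that $\left[\mu\,\p(\p\vec{u}/\p n\cdot\vec{t})/\p t\right]=\p\left[\mu\,\p\vec{u}/\p n\cdot\vec{t}\right]/\p t$. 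The bracket on the right is then supplied by the tangential projection of the stress balance (\ref{eq:stress_balance}): using $\bigl((\nabla\vec{u})^T\cdot\vec{n}\bigr)\cdot\vec{t}=\p\vec{u}/\p t\cdot\vec{n}$ and the continuity of the tangential derivatives of $\vec{q}$, one obtains $\left[\mu\,\p\vec{u}/\p n\cdot\vec{t}\right]=\vec{f}\cdot\vec{t}-[\mu]\,\p\vec{q}/\p t\cdot\vec{n}$, and similarly along $\vec{b}$. Substituting and expanding the tangential derivatives yields a normal piece $[\mu]\bigl(\p^2\vec{q}/\p t^2+\p^2\vec{q}/\p b^2\bigr)\cdot\vec{n}$, which combines with the surface-Laplacian term into $2[\mu]\,\vec{n}\cdot\nabla_s^2\vec{q}$; a curvature piece that doubles to $2[\mu]\bigl(\kappa_t\,\p\vec{q}/\p t\cdot\vec{t}+\kappa_b\,\p\vec{q}/\p b\cdot\vec{b}\bigr)$; and the force pieces $-\p(\vec{f}\cdot\vec{t})/\p t-\p(\vec{f}\cdot\vec{b})/\p b$. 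Recognizing the latter as $-\nabla_s\cdot\vec{f}+(\vec{f}\cdot\vec{n})H$ by expanding the surface divergence in the Darboux frame and using $\kappa_t+\kappa_b=H$ completes (\ref{eq:dpdn2}).

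The main obstacle I anticipate is the frame and commutator bookkeeping in the central step: one must simultaneously track the non-commutativity of $\p/\p n$ and $\p/\p t$ and the derivatives of the moving frame, and get every principal-curvature sign correct. Both the commutator and the product-rule terms feed the $\kappa_t,\kappa_b$ contributions, and it is precisely their coincidence that produces the factor $2[\mu]$ rather than $[\mu]$; keeping the convention $\p\vec{t}/\p t=-\kappa_t\vec{n}$ consistent throughout, and correctly identifying $\vec{n}\cdot\bigl(\p^2\vec{q}/\p t^2+\p^2\vec{q}/\p b^2\bigr)$ with $\vec{n}\cdot\nabla_s^2\vec{q}$, is where an error would most easily creep in.
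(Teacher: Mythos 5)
Your proposal is correct, and it is essentially the paper's own argument run in reverse order. The paper starts from the stress balance: it applies the surface divergence to the vector identity $[\vec{\sigma}\cdot\vec{n}]=\vec{f}$, expands via the Darboux relations, invokes Theorems \ref{thm:dudn2} and \ref{thm:slapdn} to solve for $\left[\mu\,\p^2\vec{u}/\p n^2\cdot\vec{n}\right]$, substitutes that into (\ref{eq:dpdn1}), and only at the end converts jumps of the form $[\mu(\cdot)]$ into $[\mu](\cdot)$ --- a step that costs an auxiliary lemma, namely $\left[\p^2\vec{u}/\p t^2\right]=\kappa_t\left[\p\vec{u}/\p n\right]$ and hence $\left[\p^2\vec{u}/\p t^2\right]\cdot\vec{n}=0$ (the paper's Eqs. (\ref{eq:dpdn_der_7})--(\ref{eq:dpdn_der_8})). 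You instead start from (\ref{eq:dpdn1}), split the Laplacian, use Theorem \ref{thm:normal_jump} and Theorem \ref{thm:dudn2} to reduce everything to the mixed jumps, and evaluate those by tangentially differentiating the tangential stress-balance components after they have already been reduced to interface data, $\left[\mu\,\p\vec{u}/\p n\cdot\vec{t}\right]=\vec{f}\cdot\vec{t}-[\mu]\,\p\vec{q}/\p t\cdot\vec{n}$. Because you jump first and differentiate second, the $[\mu](\cdot)$ conversion happens automatically and the paper's auxiliary lemma is never needed; the ingredients (momentum jump, Laplacian splitting, Theorems \ref{thm:normal_jump}--\ref{thm:slapdn}, tangential differentiation of the stress balance, principal directions) are otherwise identical.

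One concrete caution on the step you yourself flag as delicate. Because you differentiate the scalar components $[\vec{\sigma}\cdot\vec{n}]\cdot\vec{t}$ and $[\vec{\sigma}\cdot\vec{n}]\cdot\vec{b}$ rather than the vector $[\vec{\sigma}\cdot\vec{n}]$ under $\nabla_s\cdot$ (as the paper does), the frame derivatives $\p\vec{t}/\p t$ and $\p\vec{b}/\p b$ enter your computation, and your stated relation $\p\vec{t}/\p t=-\kappa_t\vec{n}$ is incomplete: principal directions are generally not geodesics, so $\p\vec{t}/\p t=-\kappa_t\vec{n}+\kappa_t^g\vec{b}$ with nonzero geodesic curvature $\kappa_t^g$ (and similarly along $\vec{b}$). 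Likewise, your identification of $\vec{n}\cdot\left(\p^2\vec{q}/\p t^2+\p^2\vec{q}/\p b^2\right)$ with $\vec{n}\cdot\nabla_s^2\vec{q}$ holds only up to the same geodesic terms, since the intrinsic surface Laplacian in a non-geodesic orthonormal frame is $\nabla_s^2\vec{q}=\p^2\vec{q}/\p t^2+\p^2\vec{q}/\p b^2-\kappa_t^g\,\p\vec{q}/\p b-\kappa_b^g\,\p\vec{q}/\p t$ (with the obvious sign conventions). These two omissions compensate exactly: if you carry the geodesic terms, their $\vec{f}$-contributions cancel against the matching terms in the expansion of $\nabla_s\cdot\vec{f}$, and their $\vec{q}$-contributions are precisely what converts $\p^2\vec{q}/\p t^2+\p^2\vec{q}/\p b^2$ into the intrinsic $\nabla_s^2\vec{q}$, so (\ref{eq:dpdn2}) emerges unchanged. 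But as sketched your argument implicitly assumes $\kappa_t^g=\kappa_b^g=0$, which is false on a general interface; to be airtight you must either carry these terms and exhibit the cancellation, or adopt the paper's vector form, which never differentiates $\vec{t}$ or $\vec{b}$ and so avoids the issue entirely.
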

\begin{proof}
	As Eq. (\ref{eq:momentum}) holds in each domain the jump in the system must be zero, $\left[\nabla p\right]=\left[\mu \nabla^2\vec{u}\right]+\left[\vec{g}\right]$.
	Dotting this by $\vec{n}$ results in Eq. (\ref{eq:dpdn1}). As has been noted elsewhere \cite{Ito2006} this form is not useful for numerical simulations due to the second-order
	partial derivatives of the velocity. 
	
	To obtain the alternative form begin by taking the surface divergence of the force balance on the interface, 
	$\nabla_s\cdot\left[\vec{\sigma}\cdot\vec{n}\right]=\nabla_s\cdot\vec{f}$, or in expanded form:
	\begin{equation}
		\label{eq:dpdn_der1}
		-\nabla_s\cdot\left[p\vec{n}\right]+\nabla_s\cdot\left[\mu\left(\nabla\vec{u}+\left(\nabla\vec{u}\right)^T\right)\cdot\vec{n}\right]=\nabla_s\cdot\vec{f}.
	\end{equation}
	Noting that the jump operator commutes with differentiation along the interface \cite{Ito2006, Xu2006}, individual portions of Eq. (\ref{eq:dpdn_der1})
	can be written as
	\begin{equation}
		\label{eq:dpdn_der2}
		\nabla_s\cdot\left[p\vec{n}\right]=\left[\nabla_s\cdot\left(p\vec{n}\right)\right] =\left[\nabla_sp\cdot\vec{n}+p\nabla_s\cdot\vec{n}\right]
			=[p]\nabla_s\cdot\vec{n}=[p]H=-\left(\vec{f}\cdot\vec{n}\right)H.	
	\end{equation}
	Next, write the surface divergence of a vector $\vec{v}$ on the interface as 
	\begin{equation}
		\nabla_s\cdot\vec{v}=\frac{\p \vec{v}}{\p t}\cdot\vec{t}+\frac{\p \vec{v}}{\p b}\cdot\vec{b}.
	\end{equation}
	Then,
	\begin{equation}
		\nabla_s\cdot\left[\mu\nabla\vec{u}\cdot\vec{n}\right]=\left[\mu\nabla_s\cdot\frac{\p \vec{u}}{\p n}\right]
			=\left[\mu \left(	\frac{\p}{\p t}\left(\frac{\p\vec{u}}{\p n}\right)\cdot\vec{t}	+\frac{\p}{\p b}\left(\frac{\p\vec{u}}{\p n}\right)\cdot\vec{b}		 \right)\right]			
	\end{equation}
	When considering the derivatives of $\p\vec{u}/\p n = \left(\nabla\vec{u}\right)\cdot\vec{n}$ along the tangent directions it is necessary to also take derivatives of the normal vector along the 
	same directions. This results in the following expression along the $t$-direction
	\begin{equation}
		\frac{\p}{\p t}\left(\frac{\p\vec{u}}{\p n}\right)\cdot\vec{t}=\frac{\p^2\vec{u}}{\p n\p t}\cdot\vec{t}+\left(\left(\nabla\vec{u}\right)^T\cdot\vec{t}\right)\cdot\frac{\p\vec{n}}{dt},	
	\end{equation}
	where the derivative along the $\vec{b}$-direction is similar. In the Darboux Frame it can be written that $\p \vec{n}/dt=\kappa_t \vec{t}+\tau_t\vec{b}$, where
	$\tau_t$ is the geodesic torsion along the $t$ direction. As the $t$-direction is a principle direction then the geodesic 
	torsion is zero, $\tau_t=0$ \cite{Carmo1976}. Using
	this results in the following expression,
	\begin{equation}
		\frac{\p}{\p t}\left(\frac{\p\vec{u}}{\p n}\right)\cdot\vec{t}=\frac{\p^2\vec{u}}{\p n\p t}\cdot\vec{t}+\kappa_t\frac{\p \vec{u}}{\p t}\cdot\vec{t}.
	\end{equation}
	Combining this with a similar result in the $b$-direction gives
	\begin{equation}
		\label{eq:dpdn_der3}
		\nabla_s\cdot \left[\mu\nabla\vec{u}\cdot\vec{n}\right]=
			\left[\mu\left(\frac{\p^2\vec{u}}{\p n\p t}\cdot\vec{t}+\kappa_t\frac{\p \vec{u}}{\p t}\cdot\vec{t} + \frac{\p^2\vec{u}}{\p n\p b}\cdot\vec{b}+\kappa_b\frac{\p \vec{u}}{\p b}\cdot\vec{b}		\right)\right]
	\end{equation}
	
	In a similar fashion the final portion of Eq. (\ref{eq:dpdn_der1}) can be written as 
	\begin{equation}
		\label{eq:dpdn_der4}
		\nabla_s\cdot \left[\mu\left(\nabla\vec{u}\right)^T\cdot\vec{n}\right]=
			\left[\mu\left(\frac{\p^2\vec{u}}{\p t^2}\cdot\vec{n}+\kappa_t\frac{\p \vec{u}}{\p t}\cdot\vec{t} + \frac{\p^2\vec{u}}{\p b^2}\cdot\vec{n}+\kappa_b\frac{\p \vec{u}}{\p b}\cdot\vec{b}		\right)\right]
	\end{equation}
	
	Combining Eqs. (\ref{eq:dpdn_der1}), (\ref{eq:dpdn_der2}), (\ref{eq:dpdn_der3}), and (\ref{eq:dpdn_der4}) and using Theorem \ref{thm:dudn2} and Theorem \ref{thm:slapdn} gives the expression
	\begin{equation}
		\label{eq:dpdn_der5}		
		\left[\mu\left(\frac{\p ^2\vec{u}}{\p n^2}\cdot\vec{n}\right)\right]=\left(\vec{f}\cdot\vec{n}\right) H-\nabla_s\cdot\vec{f} + 
					\left[\mu\left(\nabla_s^2\vec{u}\cdot\vec{n}
					+2\kappa_t\frac{\p \vec{u}}{\p t}\cdot\vec{t}+2\kappa_b\frac{\p \vec{u}}{\p b}\cdot\vec{b}\right)\right].
	\end{equation}
	
	Next, use Eq. (\ref{eq:split_lap}) to split the 
	Laplacian term in the base jump condition, Eq. (\ref{eq:dpdn1}), and combine with Eq. \ref{eq:dpdn_der5} to obtain
	\begin{equation}
		\label{eq:dpdn_der_6}
		\left[\frac{\p p}{\p n}\right]=
		2\left[\mu\left(\vec{n}\cdot\nabla_s^2\vec{u}+\kappa_t \frac{\p \vec{u}}{\p t}\cdot\vec{t}+\kappa_b \frac{\p \vec{u}}{\p b}\cdot\vec{b}\right)\right]
			 -\nabla_s\cdot\vec{f}+\left(\vec{f}\cdot\vec{n}\right)H+\left[\vec{g}\right]\cdot\vec{n}.
	\end{equation}
	
	Letting $\kappa_t^g$ be the geodesic curvature along the $t$ direction allows for the following to hold, 
	\begin{align}
		\label{eq:dpdn_der_7}
		\dfrac{\partial}{\partial t}\left[\dfrac{\partial\vec{u}}{\partial t}\right]&=\left[\dfrac{\partial}{\partial t}\left(\dfrac{\partial\vec{u}}{\partial t}\right)\right]=
			\left[\dfrac{\partial ^2\vec{u}}{\partial t^2}-\kappa_t^g\dfrac{\partial\vec{u}}{\partial b}-\kappa_t\dfrac{\partial\vec{u}}{\partial n}\right] \nonumber \\ 
			&=\left[\dfrac{\partial ^2\vec{u}}{\partial t^2}\right]-\kappa_t^g\left[\dfrac{\partial\vec{u}}{\partial b}\right]-\kappa_t\left[\dfrac{\partial\vec{u}}{\partial n}\right] \nonumber \\
			&=\left[\dfrac{\partial ^2\vec{u}}{\partial t^2}\right]-\kappa_t^g\dfrac{\partial\left[\vec{u}\right]}{\partial b}-\kappa_t\left[\dfrac{\partial\vec{u}}{\partial n}\right] \nonumber \\
			&=\left[\dfrac{\partial ^2\vec{u}}{\partial t^2}\right]-\kappa_t\left[\dfrac{\partial\vec{u}}{\partial n}\right].	
	\end{align}
	Due to the continuity of tangential derivatives, $[\partial \vec{u}/\partial t]=\p\left[\vec{u}\right]/\p t=0$ holds on the entire interface.
	It is thus possible to state that $\partial/\partial t [\partial \vec{u}/\partial t]=0$. Therefore
	\begin{equation}
		\label{eq:dpdn_der_8}
		\left[\dfrac{\partial ^2\vec{u}}{\partial t^2}\right]=\kappa_t\left[\dfrac{\partial\vec{u}}{\partial n}\right].
	\end{equation}

	When Eq. (\ref{eq:dpdn_der_8}) is dotted by the unit normal vector and noting 
	that $[\partial\vec{u}/\partial n]\cdot\vec{n}=0$, see Ref. \cite{Tan2009},
	the expression $\left[\partial^2\vec{u}/\partial t^2\right]\cdot\vec{n}=0$ holds. A similar expression holds in the $b$ direction.
	
	The exact form given in Eq. (\ref{eq:dpdn2}) can be obtained by writing 
	jumps of the form $[a\times b]$ as
	 $[a\times b]=[a]b^\pm+a^\mp[b]$.
	This allows for the following to hold,
	\begin{align}
		\left[\mu\frac{\p^2\vec{u}}{\p t^2}\cdot\vec{n}\right]=&[\mu]\frac{\p^2 \vec{q}}{\p t^2}\cdot\vec{n}+\mu^\mp\left[\frac{\p^2\vec{u}}{\p t^2}\right]\cdot\vec{n}
				=[\mu]\frac{\p^2 \vec{q}}{\p t^2}\cdot\vec{n}.
	\end{align}
	Replacing the appropriate terms in Eq. (\ref{eq:dpdn_der_6}) results in the alternative jump in the normal pressure derivative.	
\qquad \end{proof}

\begin{theorem}
	The jump in the second-normal derivative of the pressure is given by
	\begin{equation}
		\label{eq:pnn}
		\left[\frac{\p^2 p}{\p n^2}\right]=\nabla_s^2\left(\vec{f}\cdot\vec{n}\right)+\left[\nabla\cdot\vec{g}\right]-\left[\frac{\p p}{\p n}\right]H,
	\end{equation}
	where the jump in the first-normal derivative of the pressure is given in Eq. (\ref{eq:dpdn2}).
\end{theorem}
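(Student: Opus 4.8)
The plan is to obtain this jump exactly the way Theorem \ref{thm:pressure_jump} and the preceding theorem were obtained: take the jump of an equation that holds identically in each subdomain, then project onto the local Darboux frame. Since the pressure-Poisson equation, Eq. (\ref{eq:pressue-poisson}), holds separately in $\Omega^-$ and $\Omega^+$, the jump of both sides gives immediately $\left[\nabla^2 p\right]=\left[\nabla\cdot\vec{g}\right]$. The remaining work is to re-express $\left[\nabla^2 p\right]$ in terms of the second-normal derivative that appears on the left of Eq. (\ref{eq:pnn}).

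First I would split the full Laplacian of the scalar $p$ in the local frame exactly as the surface-Laplacian decomposition was used in Theorem \ref{thm:slapdn}, namely $\nabla^2 p=\nabla_s^2 p+\p^2 p/\p n^2+H\,\p p/\p n$, where $\nabla_s^2$ is the intrinsic surface Laplacian and $H$ is the total curvature. Taking the jump of this identity and using the continuity of the interface geometry, $H^-=H^+$, yields
\begin{equation}
\left[\nabla\cdot\vec{g}\right]=\left[\nabla_s^2 p\right]+\left[\frac{\p^2 p}{\p n^2}\right]+H\left[\frac{\p p}{\p n}\right].
\end{equation}
Solving this for the second-normal derivative leaves only the term $\left[\nabla_s^2 p\right]$ to be identified.

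The key step is that the jump operator commutes with differentiation along the interface, a fact already invoked in the proof of the previous theorem (see Refs. \cite{Ito2006,Xu2006}). Because $\nabla_s^2$ is an intrinsic operator built only from tangential derivatives and the surface metric, all of which are continuous across $\Gamma$, one has $\left[\nabla_s^2 p\right]=\nabla_s^2[p]$. Inserting the pressure jump from Theorem \ref{thm:pressure_jump}, $[p]=-\vec{f}\cdot\vec{n}$, gives $\left[\nabla_s^2 p\right]=-\nabla_s^2\left(\vec{f}\cdot\vec{n}\right)$. Substituting this back and rearranging produces Eq. (\ref{eq:pnn}), with the first-normal-derivative jump supplied by Eq. (\ref{eq:dpdn2}).

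I expect the main obstacle to be justifying the commutation $\left[\nabla_s^2 p\right]=\nabla_s^2[p]$ cleanly: one must confirm that the surface Laplacian genuinely contains no normal derivative of $p$ (so that no uncontrolled jump in $\p p/\p n$ or $\p^2 p/\p n^2$ leaks into this term) and that the geometric coefficients, the curvatures and the metric, are truly continuous across $\Gamma$. Once the decomposition of $\nabla^2 p$ is fixed and the intrinsic nature of $\nabla_s^2$ is established, the rest is a direct substitution.
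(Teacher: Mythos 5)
Your proposal is correct and follows essentially the same route as the paper: take the jump of the pressure-Poisson equation, decompose $\nabla^2 p$ into $\p^2 p/\p n^2+\nabla_s^2 p+H\,\p p/\p n$, and replace $\left[\nabla_s^2 p\right]$ by $\nabla_s^2[p]=-\nabla_s^2\left(\vec{f}\cdot\vec{n}\right)$ via Theorem \ref{thm:pressure_jump}. The only difference is that you spell out the commutation of the jump with the surface Laplacian, which the paper leaves implicit in its phrase ``using the result of Theorem \ref{thm:pressure_jump}.''
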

\begin{proof}
	As the pressure-Poisson equation, Eq. (\ref{eq:pressue-poisson}), holds in each domain the jump is given by $\left[\nabla^2 p\right]=\left[\nabla\cdot\vec{g}\right]$.
	Splitting the Laplacian of the pressure into the local frame results in
	\begin{equation}
		\left[\frac{\p^2 p}{\p n^2}+\nabla_s^2 p+\frac{\p p}{\p n}H\right]=\left[\nabla\cdot\vec{g}\right].
	\end{equation}
	Solving for the jump in the second derivative and using the result of Theorem \ref{thm:pressure_jump} results in Eq. (\ref{eq:pnn}).
\qquad \end{proof}

The final three theorems outline the derivation of the velocity jump conditions.

\begin{theorem}
	The jump in the augmented velocity is given by
	\begin{equation}
		\left[\mu\vec{u}\right]=\left[\mu\right]\vec{q}.
	\end{equation}
\end{theorem}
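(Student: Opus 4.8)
The plan is to exploit the product structure of the augmented velocity $\mu\vec{u}$ together with the two facts already established for the velocity field: its $C^0$-continuity across $\Gamma$, i.e. $[\vec{u}]=0$, and the no-slip identification $\vec{u}^+=\vec{u}^-=\vec{q}$ on the interface. The only tool needed is the Leibniz-type rule for the jump of a product, $[ab]=[a]b^\pm+a^\mp[b]$, which was already invoked in the derivation of Eq. (\ref{eq:dpdn2}) and follows from the algebraic identity $a^+b^+-a^-b^-=(a^+-a^-)b^-+a^+(b^+-b^-)$ (and symmetrically with the roles of the $+$ and $-$ one-sided limits exchanged).

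First I would apply this rule with $a=\mu$ and $b=\vec{u}$ to obtain
\begin{equation}
	[\mu\vec{u}]=[\mu]\vec{u}^\pm+\mu^\mp[\vec{u}].
\end{equation}
Because the viscosity is only piecewise constant, the factor $[\mu]=\mu^+-\mu^-$ is in general nonzero and must be retained. Next I would invoke the $C^0$-continuity of the velocity recorded at the start of this section, namely $[\vec{u}]=0$; this annihilates the second term and leaves $[\mu\vec{u}]=[\mu]\vec{u}^\pm$. Finally, evaluating the surviving one-sided limit on the interface and using the no-slip identification $\vec{u}^+=\vec{u}^-=\vec{q}$ replaces $\vec{u}^\pm$ by the interface velocity $\vec{q}$, yielding $[\mu\vec{u}]=[\mu]\vec{q}$, as claimed.

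There is no genuine obstacle in this argument; it is essentially a one-line consequence of the product rule and the continuity of $\vec{u}$. The only point meriting a remark is the bookkeeping of the $\pm/\mp$ labels: a priori the product rule leaves an ambiguity as to which one-sided limit of $\vec{u}$ multiplies $[\mu]$, but since $[\vec{u}]=0$ forces $\vec{u}^+=\vec{u}^-=\vec{q}$ on $\Gamma$, this ambiguity is immaterial and the coefficient of $[\mu]$ is unambiguously $\vec{q}$.
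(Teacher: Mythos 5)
Your proof is correct and follows essentially the same route as the paper: both expand $\left[\mu\vec{u}\right]$ via the product rule for jumps and then use $[\vec{u}]=0$ together with $\vec{u}^+=\vec{u}^-=\vec{q}$ on $\Gamma$ to conclude. Your extra remark about the $\pm/\mp$ bookkeeping being immaterial is a fair clarification but does not change the argument.
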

\begin{proof}
	Expanding the jump $\left[\mu\vec{u}\right]$ results in
	\begin{align}
		\left[\mu\vec{u}\right] = \left[\mu\right]\vec{q}+\mu^\pm\left[\vec{u}\right] = \left[\mu\right]\vec{q},
	\end{align}
	due to $[\vec{u}]=0$ and $\vec{u}^+=\vec{u}^-=\vec{q}$ on the interface.
\qquad \end{proof}

\begin{theorem}
	The jump in the normal derivative of the augmented velocity is given by
	\begin{equation}
		\label{eq:dudn}
		\left[\mu\dfrac{\partial\vec{u}}{\partial n}\right]=\vec{P}\vec{f}-[\mu]\left(\left(\dfrac{\partial\vec{q}}{\partial t}\cdot\vec{n}\right)\vec{t}+\left(\dfrac{\partial\vec{q}}{\partial b}\cdot\vec{n}\right)\vec{b}\right)
	\end{equation}
	where $\vec{P}$ is the projection operator given by $\vec{P}=\vec{I}-\vec{n}\otimes\vec{n}$.
\end{theorem}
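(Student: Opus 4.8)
The plan is to read the jump directly off the interfacial stress balance, Eq.~(\ref{eq:stress_balance}), written in expanded form as
\begin{equation}
	\left[-p\vec{n}+\mu\left(\nabla\vec{u}+\left(\nabla\vec{u}\right)^T\right)\cdot\vec{n}\right]=\vec{f}.
\end{equation}
The quantity I want, $\left[\mu\,\p\vec{u}/\p n\right]$, is precisely the jump $\left[\mu\left(\nabla\vec{u}\right)\cdot\vec{n}\right]$ sitting in the first term of the symmetric gradient dotted with $\vec{n}$, so the whole strategy is to isolate this term and move everything else to the right-hand side. The pressure contribution is immediate: since $\vec{n}$ is continuous, $\left[-p\vec{n}\right]=-[p]\vec{n}=\left(\vec{f}\cdot\vec{n}\right)\vec{n}$ by Theorem~\ref{thm:pressure_jump}. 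What remains is to evaluate the transpose term $\left[\mu\left(\nabla\vec{u}\right)^T\cdot\vec{n}\right]$.

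First I would expand $\left(\nabla\vec{u}\right)^T\cdot\vec{n}$ in the Darboux frame. Writing the gradient as $\nabla\vec{u}=\frac{\p\vec{u}}{\p n}\otimes\vec{n}+\frac{\p\vec{u}}{\p t}\otimes\vec{t}+\frac{\p\vec{u}}{\p b}\otimes\vec{b}$ and transposing gives
\begin{equation}
	\left(\nabla\vec{u}\right)^T\cdot\vec{n}=\left(\frac{\p\vec{u}}{\p n}\cdot\vec{n}\right)\vec{n}+\left(\frac{\p\vec{u}}{\p t}\cdot\vec{n}\right)\vec{t}+\left(\frac{\p\vec{u}}{\p b}\cdot\vec{n}\right)\vec{b}.
\end{equation}
Taking the jump and using continuity of the frame vectors, the normal component is killed by Theorem~\ref{thm:normal_jump}, since $\left[\mu\,\p\vec{u}/\p n\right]\cdot\vec{n}=0$. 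For the two tangential components I would invoke the fact that the tangential derivatives of $\vec{u}$ agree from both sides: because $\vec{u}^+=\vec{u}^-=\vec{q}$ on $\Gamma$, differentiating along $\vec{t}$ or $\vec{b}$ (which remains on the interface) yields $\p\vec{u}/\p t=\p\vec{q}/\p t$ and likewise in $b$. Hence each scalar $\frac{\p\vec{u}}{\p t}\cdot\vec{n}$ is continuous across $\Gamma$ and factors cleanly out of the viscosity jump, $\left[\mu\left(\frac{\p\vec{q}}{\p t}\cdot\vec{n}\right)\right]=[\mu]\left(\frac{\p\vec{q}}{\p t}\cdot\vec{n}\right)$.

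Assembling these pieces, the transpose term reduces to $[\mu]\left(\left(\frac{\p\vec{q}}{\p t}\cdot\vec{n}\right)\vec{t}+\left(\frac{\p\vec{q}}{\p b}\cdot\vec{n}\right)\vec{b}\right)$, and solving the force balance for the target term leaves
\begin{equation}
	\left[\mu\frac{\p\vec{u}}{\p n}\right]=\vec{f}-\left(\vec{f}\cdot\vec{n}\right)\vec{n}-[\mu]\left(\left(\frac{\p\vec{q}}{\p t}\cdot\vec{n}\right)\vec{t}+\left(\frac{\p\vec{q}}{\p b}\cdot\vec{n}\right)\vec{b}\right).
\end{equation}
The final step is the observation that $\vec{f}-\left(\vec{f}\cdot\vec{n}\right)\vec{n}=\left(\vec{I}-\vec{n}\otimes\vec{n}\right)\vec{f}=\vec{P}\vec{f}$, which delivers Eq.~(\ref{eq:dudn}) exactly. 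I expect the only delicate step to be the frame expansion of $\left(\nabla\vec{u}\right)^T\cdot\vec{n}$ together with the justification that the tangential derivatives collapse to derivatives of $\vec{q}$; once that continuity is in hand, the viscosity factoring and the recognition of the projection operator are routine, and Theorems~\ref{thm:normal_jump} and~\ref{thm:pressure_jump} carry out the rest of the bookkeeping.
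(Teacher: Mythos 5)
Your proof is correct, but it is organized differently from the paper's. The paper avoids the pressure entirely: it dots the stress balance with $\vec{t}$ and $\vec{b}$ (so the $\left[p\vec{n}\right]$ term drops out automatically, since $\vec{n}\cdot\vec{t}=\vec{n}\cdot\vec{b}=0$), pairs the two resulting scalar equations with $\left[\mu\,\p\vec{u}/\p n\right]\cdot\vec{n}=0$ from Theorem \ref{thm:normal_jump}, and assembles all three components into the linear system (\ref{eq:jun}), whose matrix is orthogonal and hence inverted by transposition. You keep the full vector equation instead, which obliges you to evaluate the normal part of the balance by importing $[p]=-\vec{f}\cdot\vec{n}$ from Theorem \ref{thm:pressure_jump}, and to expand $\left(\nabla\vec{u}\right)^T\cdot\vec{n}$ explicitly in the Darboux frame, with Theorem \ref{thm:normal_jump} killing its normal component. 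The shared core is identical: the stress balance, Theorem \ref{thm:normal_jump}, and the continuity of tangential derivatives giving $\left[\mu\,\p\vec{u}/\p t\right]\cdot\vec{n}=[\mu]\,\p\vec{q}/\p t\cdot\vec{n}$ (a step the paper uses silently in passing from Eq. (\ref{eq:jun}) to its solution, and which you justify explicitly). What your route buys: no matrix assembly or inversion, a single direct vector computation, and a built-in consistency check, since the normal component of your final right-hand side vanishes identically in agreement with Theorem \ref{thm:normal_jump}. What it costs: a dependency on Theorem \ref{thm:pressure_jump} that the paper's proof does not need; this is not circular, because that theorem is established earlier from the normal component of the same stress balance, but it is an extra ingredient.
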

\begin{proof}
	Dot the force balance on the interface, Eq. (\ref{eq:stress_balance}), by the tangent directions to obtain
	\begin{align}
		\left[\mu\dfrac{\partial\vec{u}}{\partial n}\right]\cdot\vec{t}=&\vec{f}\cdot\vec{t}-\left[\mu\dfrac{\partial\vec{u}}{\partial t}\right]\cdot\vec{n} \\
		\left[\mu\dfrac{\partial\vec{u}}{\partial n}\right]\cdot\vec{b}=&\vec{f}\cdot\vec{b}-\left[\mu\dfrac{\partial\vec{u}}{\partial b}\right]\cdot\vec{n}
	\end{align}
	In conjunction with Eq. (\ref{eq:normal_derivative}) a linear system can be written as 
	\begin{equation}
		\label{eq:jun}
		\begin{pmatrix}
			n_x & n_y & n_z \\
			t_x & t_y & t_z \\
			b_x & b_y & b_z
		\end{pmatrix}
		\left[\mu\dfrac{\partial\vec{u}}{\partial n}\right]=
		\begin{pmatrix}
			0 \\[15pt]
			\vec{f}\cdot\vec{t}-\left[\mu\dfrac{\partial\vec{u}}{\partial t}\right]\cdot\vec{n} \\[15pt]
			\vec{f}\cdot\vec{b}-\left[\mu\dfrac{\partial\vec{u}}{\partial b}\right]\cdot\vec{n}
		\end{pmatrix},
	\end{equation}
	where $\vec{n}=(n_x, n_y, n_z)^T$, $\vec{t}=(t_x, t_y, t_z)^T$, and $\vec{b}=(b_x, b_y, b_z)^T$ are the components of the local vectors.
	
	As the inverse of the matrix in Eq. (\ref{eq:jun}) is the transpose of the matrix the jump can be obtained as
	\begin{equation}
		\left[\mu\dfrac{\partial\vec{u}}{\partial n}\right]=\left(\vec{f}\cdot\vec{t}\right)\vec{t}+\left(\vec{f}\cdot\vec{b}\right)\vec{b}
				-[\mu]\left(\left(\dfrac{\partial\vec{q}}{\partial t}\cdot\vec{n}\right)\vec{t}+\left(\dfrac{\partial\vec{q}}{\partial b}\cdot\vec{n}\right)\vec{b}\right).
	\end{equation}
	Realizing that the term $\left(\vec{f}\cdot\vec{t}\right)\vec{t}+\left(\vec{f}\cdot\vec{b}\right)\vec{b}$ is simply the tangential projection of the force $\vec{f}$ onto the 
	interface the jump given in Eq. (\ref{eq:dudn}) is obtained.
\qquad \end{proof}

\begin{theorem}
	The jump in the second normal derivative of the augmented velocity is given by
	\begin{equation}
		\label{eq:dudn2}
		 \left[\mu\dfrac{\partial^2\vec{u}}{\partial n^2}\right]
				=-\left[\mu\right]\nabla_s^2\vec{q}-\left[\mu\dfrac{\partial \vec{u}}{\partial n}\right]H
				+\left[\dfrac{\partial p}{\partial n}\right]\vec{n} 
				-\nabla_s\left(\vec{f}\cdot\vec{n}\right)
				-\left[\vec{g}\right].
	\end{equation}
\end{theorem}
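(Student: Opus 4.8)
The plan is to obtain the identity directly from the momentum equation, Eq.~(\ref{eq:momentum}), which holds separately in $\Omega^-$ and $\Omega^+$. Writing $\nabla p^\pm=\mu^\pm\nabla^2\vec{u}^\pm+\vec{g}^\pm$ on each side and taking the jump gives $[\mu\nabla^2\vec{u}]=[\nabla p]-[\vec{g}]$. To extract the second normal derivative I would split the full Laplacian into its normal and surface parts, as done in the proof of Theorem~\ref{thm:slapdn}, namely $\nabla^2\vec{u}=\p^2\vec{u}/\p n^2+\nabla_s^2\vec{u}+H\,\p\vec{u}/\p n$. Multiplying by $\mu$, taking the jump, and pulling the continuous curvature $H$ outside the bracket yields $[\mu\,\p^2\vec{u}/\p n^2]=[\mu\nabla^2\vec{u}]-[\mu\nabla_s^2\vec{u}]-H[\mu\,\p\vec{u}/\p n]$. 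Substituting the momentum-equation jump for the first term reduces the task to evaluating $[\nabla p]$ and $[\mu\nabla_s^2\vec{u}]$.

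For the surface-Laplacian term I would use that $\nabla_s^2$ is intrinsic to $\Gamma$ and acts only on the trace of its argument. Since the velocity is $C^0$-continuous with $\vec{u}^+=\vec{u}^-=\vec{q}$ on the interface, both $\nabla_s^2\vec{u}^+$ and $\nabla_s^2\vec{u}^-$ equal $\nabla_s^2\vec{q}$, so $[\mu\nabla_s^2\vec{u}]=[\mu]\nabla_s^2\vec{q}$. For the pressure-gradient term I would decompose $\nabla p=(\p p/\p n)\vec{n}+\nabla_s p$; taking the jump and using continuity of $\vec{n}$ gives $[\nabla p]=[\p p/\p n]\vec{n}+[\nabla_s p]$. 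Because the jump operator commutes with surface differentiation, $[\nabla_s p]=\nabla_s[p]$, and Theorem~\ref{thm:pressure_jump} supplies $[p]=-\vec{f}\cdot\vec{n}$, so that $[\nabla_s p]=-\nabla_s(\vec{f}\cdot\vec{n})$.

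Assembling these pieces gives
\begin{equation}
	\left[\mu\frac{\p^2\vec{u}}{\p n^2}\right]=\left[\frac{\p p}{\p n}\right]\vec{n}-\nabla_s(\vec{f}\cdot\vec{n})-[\vec{g}]-[\mu]\nabla_s^2\vec{q}-H\left[\mu\frac{\p\vec{u}}{\p n}\right],
\end{equation}
which is exactly Eq.~(\ref{eq:dudn2}) after reordering terms. The jump $[\mu\,\p\vec{u}/\p n]$ is already known from Eq.~(\ref{eq:dudn}) and $[\p p/\p n]$ from Eq.~(\ref{eq:dpdn2}), so the right-hand side is expressed entirely in interface data.

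The main obstacle I anticipate is justifying the identity $[\mu\nabla_s^2\vec{u}]=[\mu]\nabla_s^2\vec{q}$: one must argue that the surface Laplacian of the velocity is single-valued on $\Gamma$ even though $\p^2\vec{u}/\p n^2$ itself jumps, which rests on reading $\nabla_s^2$ as the Laplace--Beltrami operator applied to the common trace $\vec{q}$ rather than to the two bulk fields separately. A secondary point is confirming that the geometric quantities $H$ and $\vec{n}$ may be factored out of the jump brackets, which is legitimate precisely because they are continuous across the interface.
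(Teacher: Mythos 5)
Your proposal is correct and is essentially the paper's own proof: both take the jump of the momentum balance, split the Laplacian via $\nabla^2\vec{u}=\p^2\vec{u}/\p n^2+\nabla_s^2\vec{u}+H\,\p\vec{u}/\p n$, pull the continuous quantities $H$ and $\vec{n}$ out of the brackets, reduce $[\mu\nabla_s^2\vec{u}]$ to $[\mu]\nabla_s^2\vec{q}$ using $[\vec{u}]=0$, and convert the tangential pressure-gradient jump to $\nabla_s[p]=-\nabla_s\left(\vec{f}\cdot\vec{n}\right)$ via Theorem \ref{thm:pressure_jump}. The only cosmetic differences are that the paper starts from the augmented form $\nabla^2(\mu\vec{u})$ (identical to your $\mu\nabla^2\vec{u}$ since $\mu$ is piecewise constant), writes the tangential gradient componentwise as $(\p[p]/\p t)\vec{t}+(\p[p]/\p b)\vec{b}$ before identifying it with $\nabla_s[p]$, and justifies $[\nabla_s^2\vec{u}]=\nabla_s^2[\vec{u}]=0$ by commuting the jump with surface differentiation rather than by your (equivalent) appeal to the intrinsic, trace-only character of the Laplace--Beltrami operator.
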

\begin{proof}
	Consider the jump of the augmented momentum balance equation, Eq. (\ref{eq:augmented_momentum}). Expand
	the Laplacian of the velocity and the gradient of the pressure into normal and tangential components:
	\begin{equation}
		\left[\mu\dfrac{\partial^2 \vec{u}}{\partial n^2}+\mu\nabla_s^2\vec{u}+\mu\dfrac{\partial \vec{u}}{\partial n}H\right]=\left[\dfrac{\partial p}{\partial n}\vec{n}+\dfrac{\partial p}{\partial b}\vec{b}+\dfrac{\partial p}{\partial t}\vec{t}\right]-\left[\vec{g}\right].
	\end{equation}
	Due to continuity of tangential derivatives, the unit normal, and the total curvature, it can be stated that 
	\begin{equation}
		\left[\mu\dfrac{\partial^2 \vec{u}}{\partial n^2}\right]=-\left[\mu\nabla_s^2\vec{u}\right]-\left[\mu\dfrac{\partial \vec{u}}{\partial n}\right]H
				+\left[\dfrac{\partial p}{\partial n}\right]\vec{n}
				+\dfrac{\partial [p]}{\partial b}\vec{b}+\dfrac{\partial [p]}{\partial t}\vec{t}-\left[\vec{g}\right].
	\end{equation}
	The jump in the pressure is simply $[p]=-\vec{f}\cdot\vec{n}$ while 
	\begin{equation}
		[\mu\nabla_s^2\vec{u}]=[\mu]\nabla_s^2\vec{q}+\mu^\pm[\nabla_s^2\vec{u}] 
								=[\mu]\nabla_s^2\vec{q}+\mu^\pm\nabla_s^2[\vec{u}]
								=[\mu]\nabla_s^2\vec{q}.
	\end{equation}
	Finally, noticing that $(\partial/\partial t)\vec{t}+(\partial/\partial b)\vec{b}=\nabla_s$ is the surface gradient the jump condition given
	in Eq. (\ref{eq:dudn2}) can be obtained.
	
\qquad \end{proof}

\section{\label{sec:TestCases} Sample Test Case}

To verify the accuracy of the derived jump conditions a sample analytic test case has been created.
To create this test case valid inner and outer velocity fields, $\vec{u}^-$ and $\vec{u}^+$, are determined
for a pre-determined interface. These velocity fields are chosen to enforce the conditions $\vec{u}^-=\vec{u}^+$ and 
$\nabla_s\cdot\vec{u}^-=\nabla_s\cdot\vec{u}^+=0$ on the interface $\Gamma$ and $\nabla\cdot\vec{u}^-=\nabla\cdot\vec{u}^+=0$ 
in the domain $\Omega$. Next, arbitrary inner and outer pressure fields, $p^-$ and $p^+$ are chosen. 
Using the velocity and pressure field the inner and outer body forces, $\vec{g}^-$ and $\vec{g}^+$ can be calculated
using Eq. (\ref{eq:momentum}). The singular force, $\vec{f}$ is obtained through the force balance on the interface, Eq. (\ref{eq:stress_balance}).

The analytic body and singular forces are then used in conjunction with the known interface velocity $\vec{q}$ and 
appropriate boundary conditions to solve the discontinuous Stokes equations
in the domain using the Immersed Interface Method. The numerical pressure and velocity fields are compared to their analytic versions to determine the 
overall spatial accuracy of the method.


In this example the interface is a unit sphere centered at the origin. The outer viscosity is taken to be 1 while the inner viscosity is given by $\mu$.
The exact inner velocity and pressure for the region $x^2+y^2+z^2\leq 1$ are given by
\begin{align}
	\vec{u}^-=&\left(\begin{array}{c}
		2yz \\
		-xz \\
		-xy \\
		\end{array}\right), \\
	p^-=&\cos(x)\sin(y+z).
\end{align}
This results in an inner body force of the form
\begin{equation}
	\vec{g}^-=\left(\begin{array}{c}
		-\sin(x)\sin(y+z) \\
		 \cos(x)\cos(y+z)\\
		 \cos(x)\cos(y+z)\\
		\end{array}\right). \\
\end{equation}

The exact outer velocity and pressure for the region $x^2+y^2+z^2>1$ is
\begin{align}
	\vec{u}^+=&\frac{1}{x^2+y^2+z^2}\left(\begin{array}{c}
		2yz \\
		-xz \\
		-xy \\
		\end{array}\right), \\
	p^+=&0,
\end{align}
which results in a body force of 
\begin{equation}
	\vec{g}^+=\frac{1}{x^2+y^2+z^2}\left(\begin{array}{c}
		12 yz \\
		-6xz\\
		-6xy\\
		\end{array}\right). \\
\end{equation}

A force balance on the interface results in a singular force of
\begin{equation}
	\vec{f}=\left(\begin{array}{c}
		-2(1+\mu)y z+x\cos(x)\sin(y+z) \\
		 (1+\mu)x z+y\cos(x)\sin(y+z)\\
		 (1+\mu)x y+z\cos(x)\sin(y+z)\\
		\end{array}\right). \\
\end{equation}

Clearly on the interface $\vec{q}=\vec{u}^-=\vec{u}^+$ holds on $\Gamma$ and can be applied in the derived jump conditions.





\section{\label{sec:Results} Results}

In the following results the domain is a taken to be a $[-2,2]^3$ cube. The pressure and velocity fields are obtained using a split calculation where
the pressure is calculated first by solving Eq. (\ref{eq:pressue-poisson}) and then obtaining the velocity field through the solution
of Eq. (\ref{eq:momentum}). It should be noted that the jump conditions allow for any Stokes field solution technique, not only the one used here.

Sample pressure results for the $z=0$ plane on a $129^3$ grid are shown in Fig. (\ref{fig:SamplePressure}). 
The external viscosities shown are $0.001$ and $1000$. 
It is clear that the sharp discontinuity of the pressure is captured
by the method and the result is not greatly affected by the viscosity ratio.
\begin{figure}[ht]
	\begin{center}
		\subfigure[$\mu^+=0.001$]{
			\includegraphics[width=0.65\textwidth]{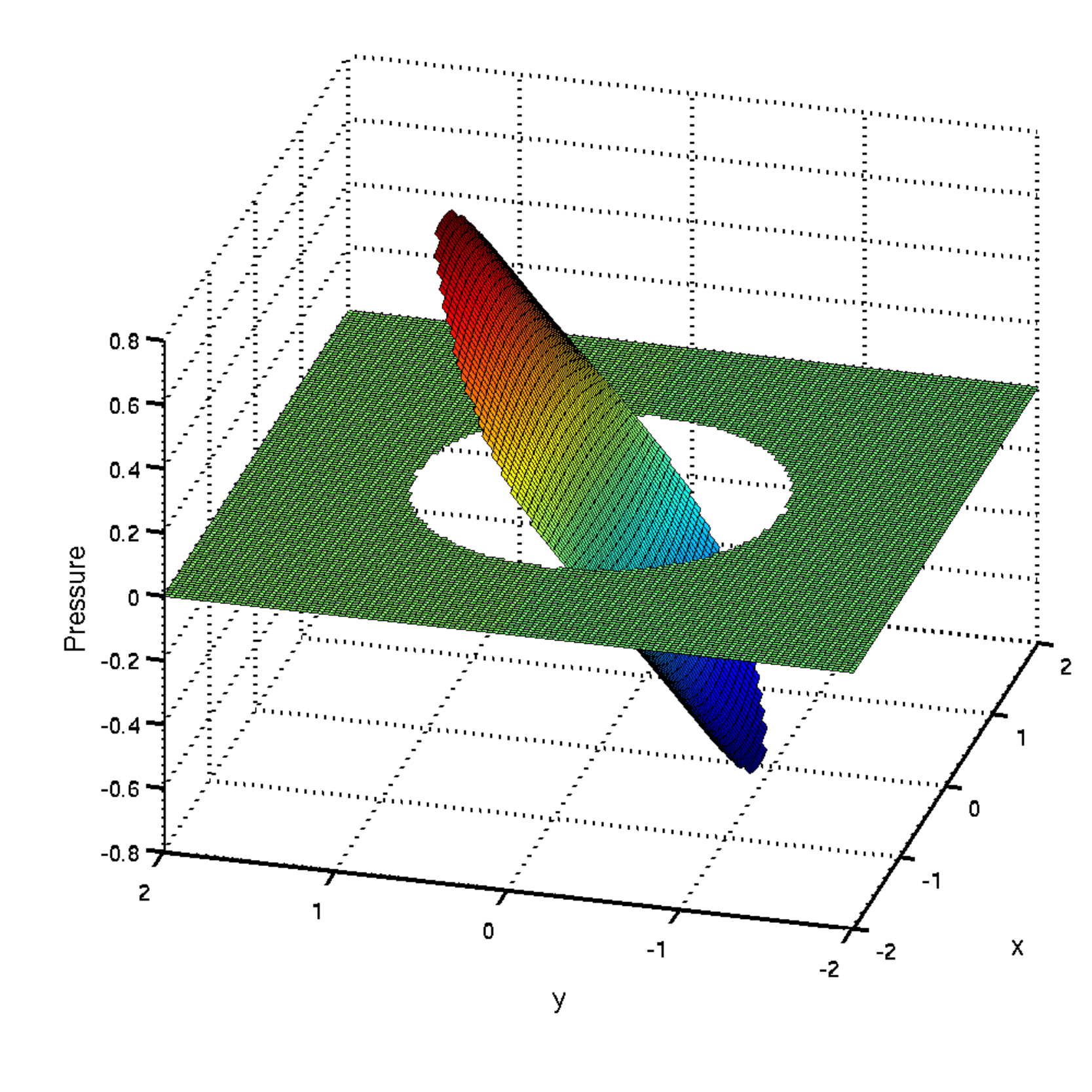}
		} \\
		\subfigure[$\mu^+=1000$]{
			\includegraphics[width=0.65\textwidth]{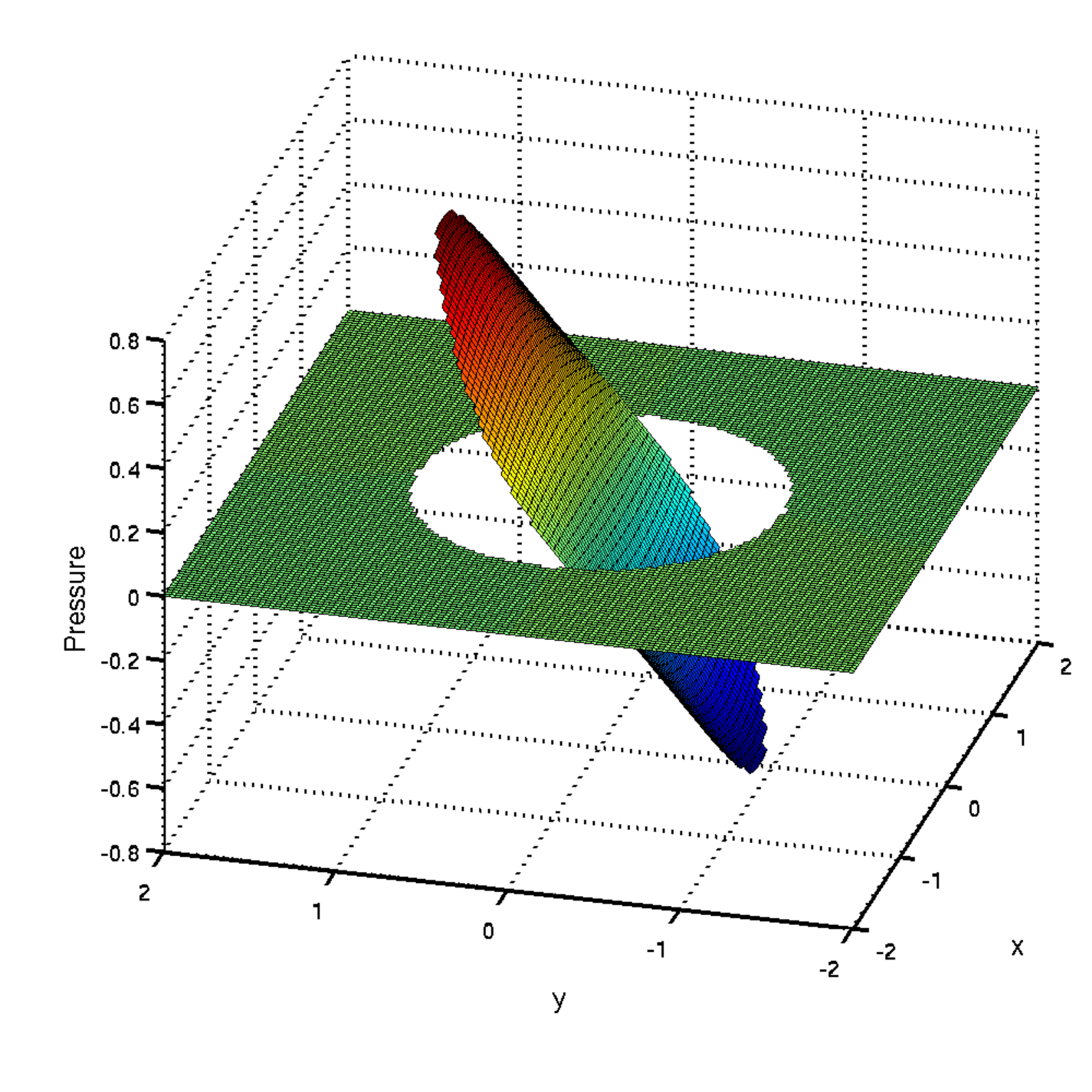}
		}
	
		\caption{\label{fig:SamplePressure} Example of the pressure distribution on the $z=0$ plane for a $129^3$ grid for outer viscositoes of 
		$0.001$ and $1000$. The inner viscosity is equal to 1.}
	\end{center}	
\end{figure}	
\clearpage

To demonstrate that the derived jump conditions, when used with a second-order Immersed Interface Method, achieve the desired accuracy the errors
for both the pressure and velocity fields are now presented. In calculating the error external viscosities ranging from $10^{-3}$ to $10^3$ have been
considered. Grid sizes ranging from $48^3$ to $384^3$ have been used, corresponding to grid spacings of $h\approx 0.0115$ to $h\approx 0.0851$.

\begin{figure}[ht]
	\begin{center}
		\subfigure[]{
			\includegraphics[width=0.65\textwidth]{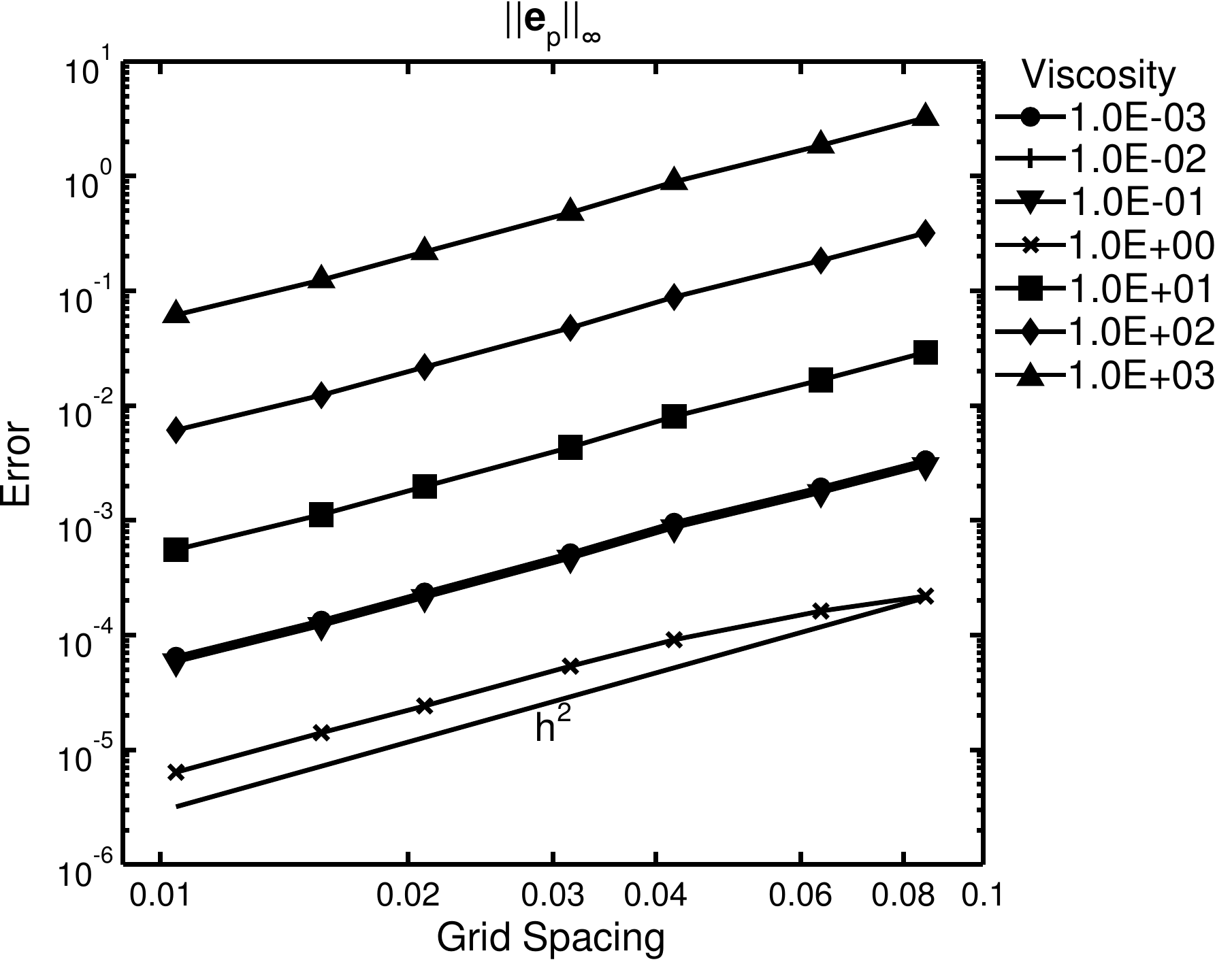}
		} \\
		\subfigure[]{
			\includegraphics[width=0.65\textwidth]{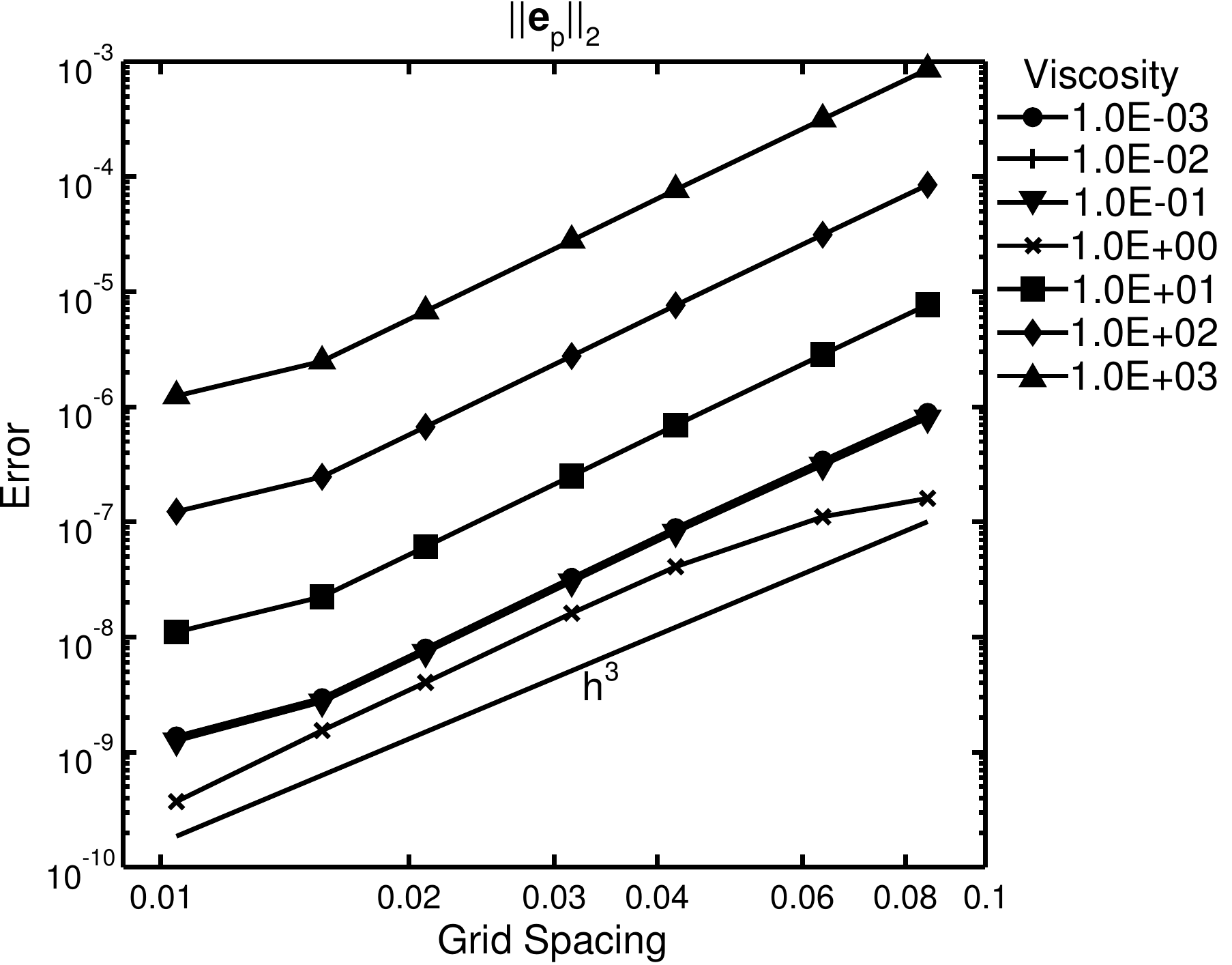}
		}
	
		\caption{\label{fig:PressureConvergence} Error of the pressure field versus grid spacing. The shown viscosities are the outer viscosities, $\mu^+$.
		The inner viscosity is taken to be $\mu^-=1$.}
	\end{center}	
\end{figure}	

\begin{figure}[ht]
	\begin{center}
		\subfigure[]{
			\includegraphics[width=0.65\textwidth]{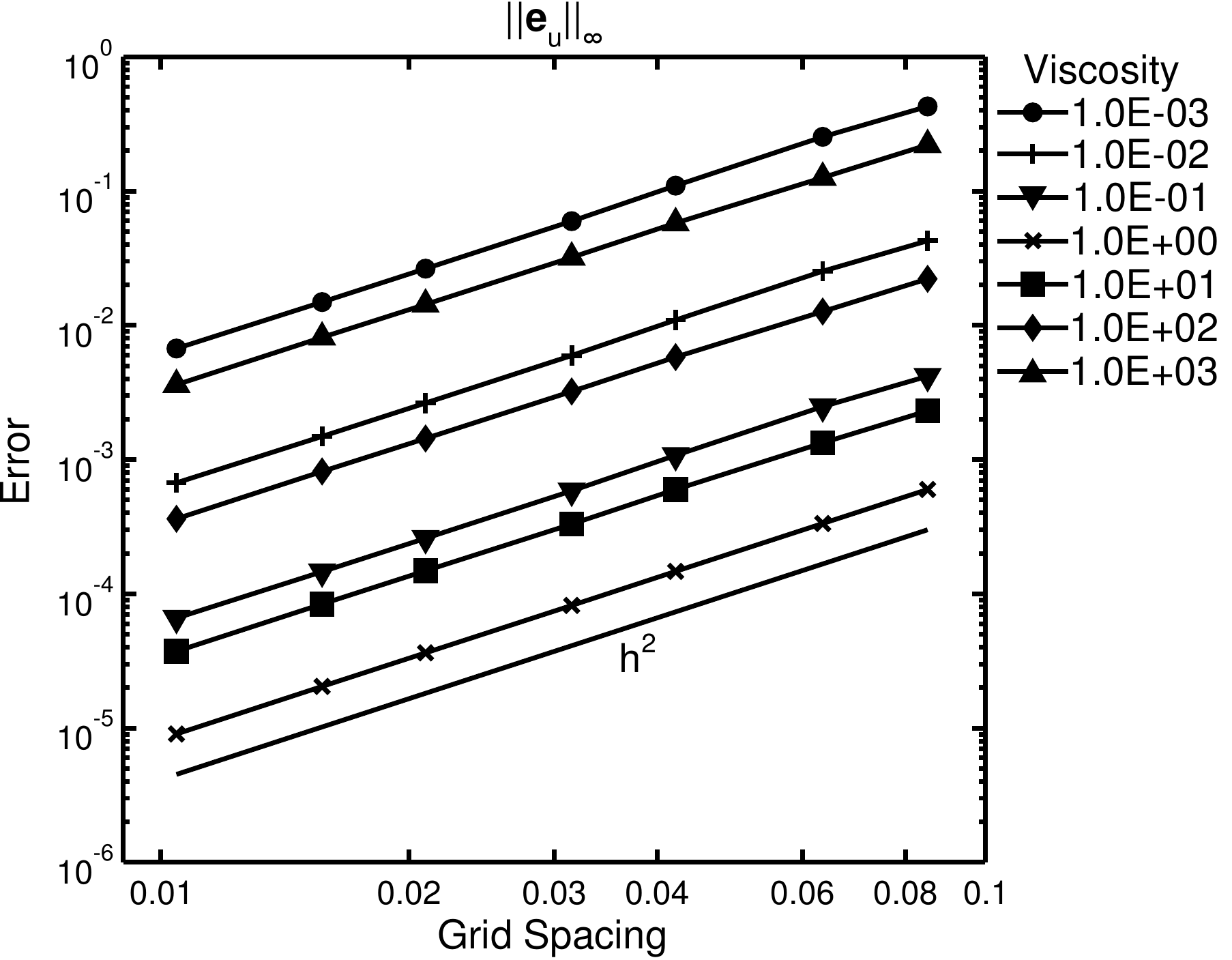}
		} \\
		\subfigure[]{
			\includegraphics[width=0.65\textwidth]{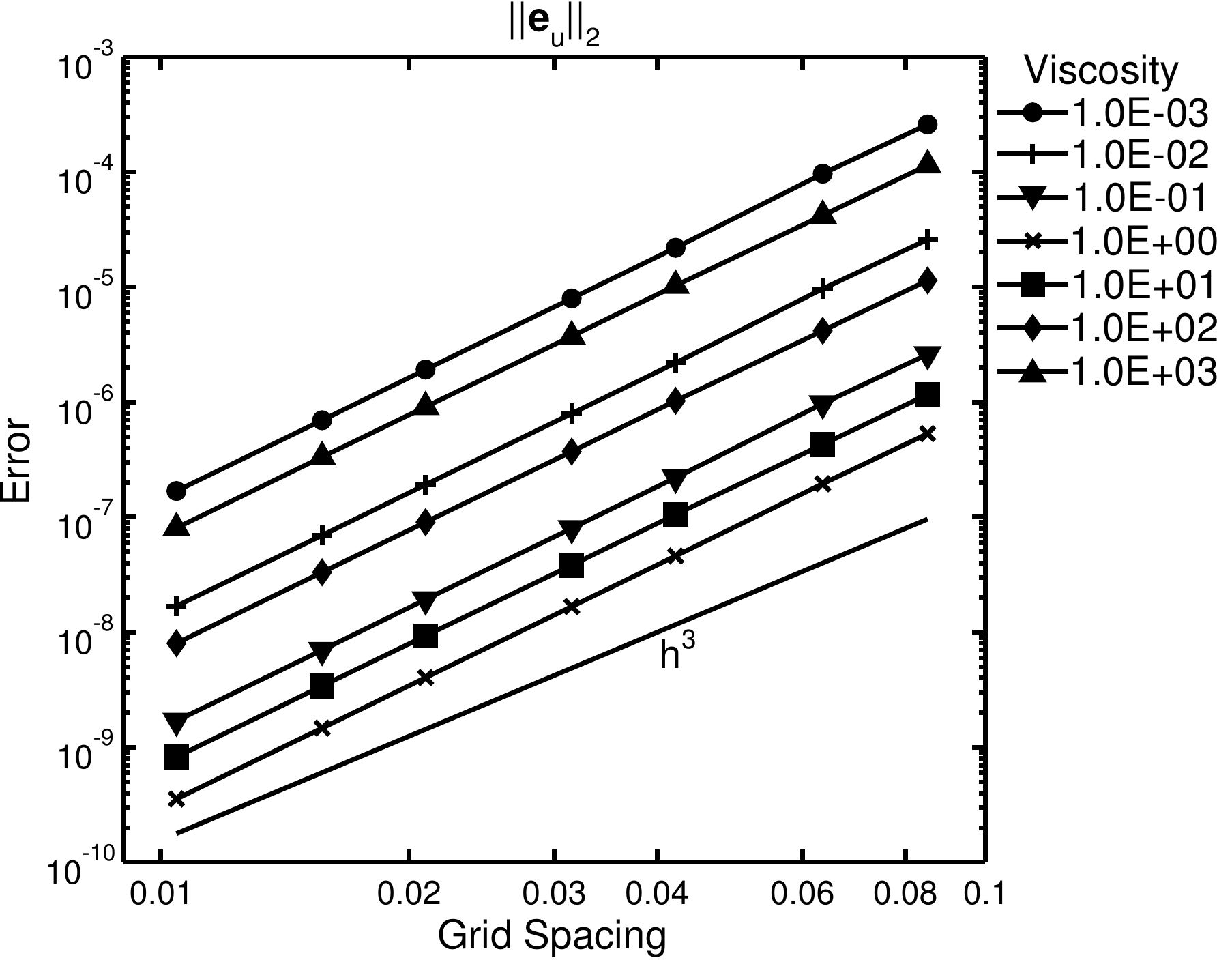}
		}
	
		\caption{\label{fig:VelocityConvergence} Error of the velocity field versus grid spacing. The shown viscosities are the outer viscosities, $\mu^+$.
		The inner viscosity is taken to be $\mu^-=1$.}
	\end{center}	
\end{figure}
\clearpage

The error for the pressure field are presented in Fig. \ref{fig:PressureConvergence} while the corresponding velocity errors are shown in 
Fig. \ref{fig:VelocityConvergence}. For both the pressure and velocity fields the second-order accuracy of the underlying discretization
is clearly preserved despite the presence of the discontinuities in the solution field. It should be noted that the overall accuracy 
of the method depends on the viscosity difference between the inner and outer fluid, with matched viscosity producing the lowest error. This 
is to be expected as the jump conditions are heavily dependent on this viscosity difference. Discretization errors in the surface derivative 
quantities, such as the surface Laplacian of Eq. (\ref{eq:dpdn2}), are amplified as the viscosity difference increases. While 
more accurate surface derivative calculations would reduce this dependence, it will always remain and must be taken into account.

\section{\label{sec:Conclusion} Concluding Remarks}

In this paper the three-dimensional, normal jump conditions for the pressure and velocity fields have been derived for an embedded incompressible/inextensible interface 
with a singular force and in the presence of a discontinuous viscosity field. The jump conditions take into account the additional constraint
of interface incompressibility given by a surface-divergent free velocity field. The jump conditions are applicable to any 
Stokes solver using the Immersed Interface Method. As a demonstration of the accuracy of the jump conditions a sample analytic test case
has been created. Error analysis using this test case demonstrates that a second-order accurate discretization maintains the expected accuracy
despite the discontinuity of the solution fields. 
Future work will use these jump conditions to construct generalized Immersed Interface Methods which are capable of enforcing 
surface incompressibility while solve the Stokes equations for multiphase Stokes flow with discontinuous viscosity.

\bibliographystyle{siam}
\bibliography{Vesicle}

\end{document}